\newtheorem{theo}{Theorem}[section]
\newtheorem{definition}{Definition}[section]
\newtheorem{lemma}[theo]{Lemma}
\newcommand{\II}{{\cal I}}
\newcommand{\E}[1]{{\mathbb E}\left[ #1 \right]}
\DeclareMathOperator*{\argmin}{arg\,min}
\title{Efficient Splitting of Measures and Necklaces}
\author{
Noga Alon
\thanks
{Department of Mathematics, Princeton University,
Princeton, NJ 08544, USA.
Email: {\tt nalon@math.princeton.edu}.
Research supported in part by
NSF grant DMS-1855464
and the Simons Foundation.
}
\and
Andrei Graur
\thanks
{Department of Mathematics, Princeton University,
Princeton, NJ 08544, USA.
Email: {\tt agraur@princeton.edu}. 
}}
\date{}
\begin{document}

\maketitle

\begin{abstract}

   We provide approximation algorithms for two problems, known as
   NECKLACE SPLITTING and $\epsilon$-CONSENSUS SPLITTING. In the problem
   $\epsilon$-CONSENSUS SPLITTING, there are $n$ 
non-atomic probability measures on
   the interval $[0, 1]$ and $k$ agents. The
   goal is to divide the interval, via at most $n (k-1)$ cuts, into pieces
   and distribute them to the $k$ agents in an approximately equitable
   way, so that the discrepancy between the shares of any two agents,
   according to each measure,
   is at most $2 \epsilon / k$. It is known that this is possible
   even for $\epsilon = 0$. NECKLACE SPLITTING is a discrete version of
   $\epsilon$-CONSENSUS SPLITTING. For $k = 2$ and some absolute positive
   constant $\epsilon$, both of these problems are PPAD-hard.

   We consider two types of approximation. The first  
   provides every agent a positive amount of measure of each type
   under the constraint of making at most $n (k - 1)$ cuts. The second
   obtains an approximately equitable split with as
   few cuts as possible. Apart from the offline model, we consider the
   online model as well, where the interval (or necklace) is presented
   as a stream, and decisions about cutting and distributing must be
   made on the spot.

   For the first type of approximation, we describe an efficient algorithm
   that gives every agent at least $\frac{1}{nk}$ of each 
   measure and works even online. For the second type
   of approximation, we provide an efficient online algorithm that makes
   $\text{poly}(n, k, \epsilon)$ cuts and an offline algorithm making
   $O(nk \log \frac{k}{\epsilon})$ cuts. We also establish lower
bounds for the number of cuts required
   in the online model for both problems even 
   for $k=2$ agents, showing that the number of cuts
   in our online algorithm is optimal up to a logarithmic factor. 
\end{abstract}

\section{Introduction}

\subsection{The problems}
The $\epsilon$-Consensus Splitting problem deals with a
fair partition of an interval among $k$ agents, according to $n$
measures. Necklace Splitting is a discrete version of the problem
where the objective is to cut a necklace
with beads of $n$ colors into intervals and distribute them 
to $k$ agents in an equitable way. Both 
problems can be solved using at most $n (k-1)$ cuts, as shown in
\cite{Al}. The proofs apply topological
arguments and are non-constructive. See also \cite{LZ} and \cite{SS}
for two and three-dimensional versions of the results. 
Known hardness results discussed in subsection 1.2, have
been proved for the original versions of these two
problems. These suggest pursuing the challenge of finding efficient
approximation
algorithms, as well as that of 
proving non-conditional hardness
in restricted models. Before adding more on the background,
we give the formal definitions of the two problems.

\begin{definition}
\label{d11}
\textbf{($\epsilon$-Consensus Splitting)} An instance $I_{n, k}$ of
$\epsilon$-Consensus Splitting with $n$ measures and $k$ agents consists
of $n$ non-atomic probability measures on the interval $[0, 1]$, which we
denote by $\mu_i$, for $i \in [n]=\{1,2,, \ldots ,n\}$. 
The goal is to split the interval, via
at most $n (k - 1)$ cuts, into subintervals and distribute them
to the $k$ agents so that for every two agents $a,b \in [k]$ 
and every measure
$i \in [n]$, we have $|\mu_i(U_a) - \mu_i(U_b)| \le \frac{2\epsilon}{k}$,
where $U_a, U_b$ are the unions of all intervals $a, b$ receive, respectively.
\end{definition}

For any allocation of the interval $[0, 1]$ to the
$k$ agents,  define the \textit{absolute discrepancy} as $\max_{a,
b \in [k], i \in [n]} |\mu_i(U_a) - \mu_i(U_b)|$. This is
the maximum difference, over all measures,
between the shares of two distinct agents.
A valid solution for the
$\epsilon$-Consensus Splitting problem is thus a set of at most
$n (k - 1)$ cuts on $[0, 1]$ and a partition 
of the resulting intervals among the $k$
agents so that the absolute discrepancy is at most $2 \epsilon / k$. 
The notion of absolute discrepancy for the necklace
problem is defined analogously. \\

The $\epsilon$-Consensus Splitting problem has a solution
for every instance, even if $\epsilon = 0$, as proved in
\cite{Al}. The proof is
non-constructive, that is, it does not yield 
an efficient algorithm for producing the cuts and the partition
for a given input. The 
$\epsilon$-Consensus Splitting
problem was first mentioned more than 70 years ago 
in \cite{Ne}. In \cite{SS} it is called
the Consensus-$1/k$-division problem. \\

For $k = 2$, the problem is closely related to the
Hobby-Rice Theorem \cite{HR}. In \cite{FG}, as well as in \cite{FG1}
and \cite{FG2}, Filos-Ratsikas, Goldberg and their collaborators
consider the
case $k = 2$. They call this version of the problem 
the \textit{$\epsilon$-Consensus Halving} problem, a terminology that
we adapt here. Some of the results we present are
proved only for $\epsilon$-Consensus Halving, but can be
generalized for
$\epsilon$-Consensus Splitting, as discussed in Section 6.

\begin{definition}
\label{d12}
\textbf{(Necklace Splitting)} An instance of Necklace Splitting for $n$
colors and $k$ agents consists of a set of beads ordered along
a line, where each bead is colored by exactly one color $i
\in [n]=\{1,2, \ldots ,n\}$. 
The goal is to split the necklace, via at most $n (k - 1)$
cuts made between consecutive beads into intervals and distribute
them
to the $k$ agents so that for each color $i$, every agent gets
either $\lceil \frac{m_i}{k} \rceil$ or $\lfloor \frac{m_i}{k} \rfloor$
beads of color $i$, where $m_i$ is the number of beads of color $i$.
\end{definition}

Note that this definition is slightly 
broader than the one given in
\cite{Al}, where it is assumed that $m_i$ is divisible
by $k$ for all $i \in [n]$.
However, as shown in \cite{AM},
these two forms of the Necklace Splitting problem are equivalent. 
As in the case
with $\epsilon$-Consensus Splitting, we call the special case
$k=2$ of two agents the
Necklace Halving problem. \\

The existence of a solution for the Necklace Splitting problem was proved,
using topological arguments, first for $k = 2$ agents in \cite{GW} (see
also \cite{AW} for a short proof), and 
then for the general case of $k$ agents
in \cite{Al}. A more recent combinatorial proof of this existence
result appears in \cite{Me}.
As in the case with $\epsilon$-Consensus
Splitting, these proofs are non-constructive. 
The Necklace Halving problem is first discussed
in \cite{BL}. The problem of finding an efficient 
algorithmic proof of
Necklace Splitting is mentioned in \cite{Al2}.  \\

Recently, there have been several results regarding the hardness of the
$\epsilon$-Consensus Halving and the 
Necklace Halving problems. 
These are discussed in the next subsection.

\subsection{Hardness and Approximation}
PPA and PPAD are two complexity classes introduced in the seminal paper
of Papadimitriou, \cite{Pa}. Both of these are contained in the class
TFNP, which is the complexity class of \textit{total search} problems,
consisting of all problems in NP where a solution exists for every
instance. A problem is PPA-complete if and only if it is
polynomially equivalent to the canonical problem LEAF, described in
\cite{Pa}. Similarly, a problem is PPAD-complete if and only
if it is polynomially equivalent to the problem END-OF-THE-LINE. 
A problem is PPA-hard or PPAD-hard if the respective canonical problem
is polynomially reducible to it. A number of important problems, such as
several versions of Nash Equilibrium \cite{DW} and Market Equilibrium
\cite{CS}, have been proved to be PPAD-complete. It is known
that PPAD $\subseteq$ PPA. Hence, PPA-hardness implies PPAD-hardness. \\

Filos-Ratsikas and Goldberg showed that the
$\epsilon$-Consensus Halving problem, first for $\epsilon$ inversely
exponential \cite{FG1} then for $\epsilon$ inversely polynomial 
in the number of measures \cite{FG},
as well as Necklace Halving, are PPA-hard problems. Furthermore, in
a subsequent paper with Frederiksen and 
Zhang \cite{FG2}, they showed that there exists a constant $\epsilon >
0$ for which $\epsilon$-Consensus Halving is PPAD-hard. 
Our
main objective here is to find efficient approximation algorithms for
these problems.

\subsection{Our contribution}
We consider approximation algorithms for two versions of the
problem, which we call
\textit{type 1 approximation} and
\textit{type 2 approximation}, respectively. The first one is in the
context of the $\epsilon$-Consensus Splitting problem with the aim of
providing strictly positive measure of each type to
each of the $k$ agents using
at most $n (k - 1)$ cuts. In the second one,
for both $\epsilon$-Consensus Halving and Necklace Halving, we allow
the algorithm to make more than $n$ cuts, and expect a \textit{proper}
solution. A \textit{proper} solution is a finite set of cuts
and a distribution of the resulting intervals to the $k$ agents
so that the absolute discrepancy is at most $2 \epsilon/k$ 
(or at most $ 1$ in the
case of Necklace Splitting). The objective is to minimize the number
of cuts the algorithm makes. Type 2 approximation
has been considered earlier in \cite{BL2} and \cite{BT}.  \\

In addition to approximation, we also consider hardness in
a restricted model, namely the online model, discussed in
detail in Section 2. In the online model, the hardness is measured by
the minimum number of cuts needed to produce a proper solution. 
Lower bounds on the number of cuts needed in this model
provide a barrier for what online algorithms can achieve.  \\

Some of our ideas for finding deterministic type 2 approximation
algorithms are inspired by papers in Discrepancy Minimization, such
as \cite{AKSS}, \cite{BS2}, \cite{Bsl} and \cite{BS}. In \cite{AKSS},
the terminology refers to the \textbf{Balancer} as the entity with
the designated task of minimizing the absolute discrepancy between
agents. We adopt the same terminology here. Thus, the Balancer
has the role of an algorithm that makes cuts and assigns the resulting
intervals to agents in order to achieve a proper solution for either
$\epsilon$-Consensus Splitting or Necklace Splitting. \\

Our main algorithmic results are summarized in the
theorems below. The upper and lower bounds obtained for the online
model appear in the table at the end of this subsection.

\setcounter{section}{3}
\begin{theo}
\label{t21}
There exists an algorithm that, given $I$, an instance of
$\epsilon$-Consensus Halving for $k$ agents, and $\rho$, an oracle for
$I$ that for any $x \in [0, 1]$, $\delta \in (0, 1)$ and index $i \in
[n]$, finds $y \in (x, 1)$ so that $\mu_i([x, y]) = \delta$ (if such
a $y$ exists), makes at most $n (k - 1)$ cuts on the interval $[0,
1]$, distributing the resulting intervals to the $k$ agents so that
each one gets at least $ \frac{1}{kn}$ 
of each measure. The algorithm makes
$\text{poly}(n, k)$ oracle calls.
\end{theo}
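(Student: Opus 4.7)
The proof goes by induction on $k$, using a ``peel off one agent'' strategy. The base case $k = 1$ requires no cuts: the single agent receives all of $[0,1]$, with mass $1 \ge 1/n$ under every $\mu_i$. For the inductive step I reduce the $k$-agent problem to the $(k-1)$-agent problem by constructing, using at most $n$ of the $n(k-1)$ available cuts and $\mathrm{poly}(n,k)$ oracle calls, a region $R \subseteq [0, 1]$ satisfying
\[
\tfrac{1}{kn} \;\le\; \mu_i(R) \;\le\; \tfrac{1}{k}\quad\text{for every } i \in [n],
\]
and assigning $R$ to one agent (say agent $k$). The remaining $n(k-2)$ cuts are spent on the recursive call for the other $k-1$ agents on $[0,1]\setminus R$.

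The accounting is straightforward. Since $\mu_i(R) \le 1/k$, the complement has $\mu_i$-mass at least $(k-1)/k$, so the inductive hypothesis gives each of the other agents at least $\tfrac{1}{(k-1)n}$ of each renormalized measure on the complement, which translates to at least $\tfrac{k-1}{k}\cdot\tfrac{1}{(k-1)n} = \tfrac{1}{kn}$ of the original $\mu_i$. Agent $k$ meanwhile gets $\ge \tfrac{1}{kn}$ of each $\mu_i$ directly from the construction of $R$. The cut budget sums to $n + n(k-2) = n(k-1)$, and the oracle-call count stays $\mathrm{poly}(n,k)$.

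The main obstacle is the peel step itself: constructing $R$ with at most $n$ cuts. A naive attempt --- placing the $n$ cuts at the $\tfrac{1}{k}$-quantile of each measure and choosing some subset of the resulting $n+1$ sub-intervals --- can fail for adversarial inputs; for example, when the $\mu_i$'s are supported on pairwise disjoint sub-intervals of $[0,1]$, no subset of those quantile intervals falls in the required mass range for every measure simultaneously. My plan is therefore to choose the cut for each measure adaptively via the oracle $\rho$: iteratively adjust the cut positions so that some fixed ``structural'' subset of the induced sub-intervals (for instance, the middle interval when $n = 2$, or an alternating pattern in general) sweeps through the target mass range $[\tfrac{1}{kn}, \tfrac{1}{k}]$ for every measure. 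Proving that such an adjustment always terminates with a valid $R$ in $\mathrm{poly}(n,k)$ oracle calls is the technical heart of the argument; I expect the proof to resemble the continuity / intermediate-value reasoning behind Hobby--Rice and Alon's existence proofs, carried out constructively with the help of the oracle.
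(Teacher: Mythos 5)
Your inductive ``peel off one agent'' framework is sound in its accounting: if you could cut out a region $R$ with $1/(kn) \le \mu_i(R) \le 1/k$ for all $i$ using at most $n$ cuts, the cut budget $n + n(k-2) = n(k-1)$ and the lower bound $(k-1)/k \cdot 1/((k-1)n) = 1/(kn)$ for the recursive agents both work out (modulo the bookkeeping of gluing $[0,1]\setminus R$ back into a single interval, which is routine). But the peel step itself, which you rightly identify as the technical heart, is a genuine gap that the proposal does not fill. Worse, the route you sketch for filling it --- ``continuity / intermediate-value reasoning behind Hobby--Rice and Alon's existence proofs, carried out constructively with the help of the oracle'' --- is exactly the non-constructive topological machinery that the whole paper is built to avoid; the cited PPA-/PPAD-hardness results are precisely the evidence that a ``constructive Borsuk--Ulam'' move is not available in general. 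You need a concrete combinatorial construction of $R$ together with a termination and correctness argument bounded by $\mathrm{poly}(n,k)$ oracle calls, and the proposal provides neither. As it stands, the argument reduces the theorem to an unproved (and possibly false, or at least nontrivial) lemma.

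The paper's proof takes a different and more elementary route that sidesteps the peel step entirely and handles all $k$ agents simultaneously. In a single left-to-right pass it lays down $kn$ \emph{marks} (not cuts) by a greedy rule: from the current position $x$, advance to the smallest $y$ at which some still ``active'' measure accumulates exactly $1/(kn)$, label the marked interval $[x,y]$ with that measure's index, and declare a measure inactive once it owns $k$ labeled intervals. A short argument shows one never runs out of measure, so this produces exactly $k$ intervals of each of the $n$ labels, each labeled-$i$ interval carrying $\mu_i$-mass exactly $1/(kn)$. It then suffices to give each agent one interval of each label, which is precisely a Necklace Splitting instance with $n$ colors and exactly $k$ beads per color. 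That instance is solved by a greedy one-pass allocation (Lemma 3.2) using exactly $n(k-1)$ cuts: cut before every bead except the first appearance of each color, and hand each created interval to an agent who has not yet received any of its colors. There is no recursion on $k$, no need to bound any single agent's share from above, and no intermediate-value argument anywhere.
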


It is worth noting that this algorithm can be implemented online
and does not require any assumption on the
behavior of the measure functions $\mu_i$. In \cite{FH}, Filos-Ratsikas
\emph{et al.} provide an efficient algorithm for the case $k = 2$ 
that gives both agents at least $\frac{1}{4}$ 
of each measure, under the
assumption that each one of the measures is uniform in some subinterval
of $[0, 1]$ and is $0$ on the rest of $[0, 1]$. The next result
deals with online algorithms. The precise online model is discussed in
the next section.

\setcounter{section}{4}
\setcounter{theo}{0}
\begin{theo}
\label{t31}
There exists an efficient, deterministic, online algorithm that, given
$I$, an instance of $\epsilon$-Consensus Halving, provides a proper
solution, making $O(\frac{n \log n}{\epsilon^2})$ cuts on the interval
$[0, 1]$.
\end{theo}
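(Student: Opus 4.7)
The plan is to reduce the problem to online vector balancing and then derandomize a random-sign assignment via a carefully chosen pessimistic estimator. As the interval streams in, I would accumulate mass in each measure until some $\mu_i$ reaches a threshold $\delta$, at which point I cut and begin a new ``atom''. Each atom then corresponds to a vector $v_t \in [0,\delta]^n$, the signs $\sigma_t \in \{+1,-1\}$ encode which of the two agents receives the atom, and the absolute discrepancies are the coordinates of $D = \sum_t \sigma_t v_t$. Because each $\mu_i$ can trigger at most $1/\delta$ cuts, the total number of atoms is $T \le n/\delta$.

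For the sign assignment, at each atom I would choose $\sigma_t$ to minimize the pessimistic estimator
$$\tilde{P}_t = \sum_{i=1}^n \cosh(\lambda D_{t,i}) \exp\!\left(\tfrac{\lambda^2 \delta}{2}(1 - M_{t,i})\right),$$
where $M_{t,i} = \sum_{s \le t} v_{s,i}$ is the mass of $\mu_i$ processed so far and $\lambda$ is a parameter. Both $D_{t,i}$ and $M_{t,i}$ are available at time $t$, so $\tilde{P}_t$ is computed online. Averaging over the two sign choices and using $\cosh(\lambda v) \le \exp(\lambda^2 v^2/2)$ together with the pointwise inequality $v_{t+1,i}^2 \le \delta\, v_{t+1,i}$ (which holds because $v_{t+1,i} \le \delta$), one checks that $\min_{\sigma}\tilde{P}_{t+1}(\sigma) \le \tilde{P}_t$. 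Hence $\tilde{P}_T \le \tilde{P}_0 = n\exp(\lambda^2 \delta/2)$; since $M_{T,i} = 1$, this gives $\sum_i \cosh(\lambda D_{T,i}) \le n\exp(\lambda^2\delta/2)$, and optimizing $\lambda = \sqrt{2\log(2n)/\delta}$ yields $\|D_T\|_\infty = O(\sqrt{\delta \log n})$.

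Taking $\delta = \Theta(\epsilon^2/\log n)$ makes the discrepancy at most $\epsilon$ and bounds the number of cuts by $T = O(n\log n/\epsilon^2)$, as required. The subtle point—and the main obstacle—is precisely the design of $\tilde{P}_t$: the naive potential $\sum_i \cosh(\lambda D_{t,i})$ only guarantees $\|D_T\|_\infty = O(\sqrt{n\delta\log n})$, which would force $\delta$ to shrink by a factor of $n$ and cost an extra factor of $n$ in the final cut count. The auxiliary factor $\exp(\lambda^2\delta(1-M_{t,i})/2)$ tracks each measure's \emph{remaining} variance separately and, essentially, matches deterministically what a union-bounded Azuma--Hoeffding argument would give for uniformly random signs. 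A secondary technical point is ensuring that the oracle access to the measures (analogous to the one used in Theorem~\ref{t21}) suffices to detect the threshold crossings that delimit the atoms online.
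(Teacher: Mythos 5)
Your proposal is correct and matches the paper's proof almost line for line: the paper cuts when some measure accumulates roughly $g = \Theta(\epsilon^2/\log n)$ (your $\delta$), defines the same pessimistic estimator $\psi_i(t) = \cosh(\lambda u_t)\,e^{\lambda^2 g(1-s_t)/2}$ tracking remaining mass, shows it is non-increasing for the greedy sign via the identity $\tfrac{1}{2}[\cosh(\lambda(u+a))+\cosh(\lambda(u-a))] = \cosh(\lambda u)\cosh(\lambda a)$ together with $\cosh(\lambda a) \le e^{\lambda^2 a^2/2} \le e^{\lambda^2 g a/2}$, and picks $\lambda = \Theta(\log n/\epsilon)$ to conclude with $O(n\log n/\epsilon^2)$ cuts. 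Even your observation about why the auxiliary $e^{\lambda^2 g(1-s_t)/2}$ factor is needed (to avoid losing a $\sqrt{n}$ in the discrepancy) is the same insight the paper's construction embodies, and your ``secondary technical point'' about detecting threshold crossings is handled in the paper by the discrete cut-point model and the slightly relaxed cut rule (cut when the next point would make some measure exceed $g$).
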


\begin{theo}
\label{t33}
There exists an efficient, deterministic, offline algorithm that,
given $I$, an instance of $\epsilon$-Consensus Halving, and $\rho$
an oracle for $I$ as in Theorem \ref{t21}, with the extra abilities to
answer queries about the sum of all the measures $\mu_i$ 
and to answer queries of the form $\mu_i([a, b])$ 
for any interval $[a, b] \subseteq [0, 1]$ and 
measure $i \in [n]$, 
provides a proper solution,
making at most $n (2 + \lceil \log_2 \frac{1}{\epsilon} \rceil)$ cuts
on the interval $[0, 1]$.
\end{theo}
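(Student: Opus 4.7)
The plan is an iterative halving scheme. Set $T = \lceil \log_2 (1/\epsilon) \rceil$. The algorithm maintains a signed partition of $[0,1]$ — intervals labelled by which of the two agents they are assigned to — and over $T$ halving rounds drives the maximum absolute discrepancy from an initial value of at most $1$ down to at most $\epsilon$, using $n$ additional cuts per round and at most $2n$ cuts for the initial setup. This yields a total of $n(2 + T)$ cuts, matching the theorem's bound.

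For the setup, I would use $\rho$ with parameter $\delta = 1/2$ to place a cut at the median of each measure $\mu_i$, giving $n$ cuts and $n+1$ intervals, and then choose signs on these intervals — evaluating candidate assignments via the interval-measure oracle $\mu_i([a,b])$ and the sum-of-measures oracle — so that every measure has absolute discrepancy at most $1$. The remaining $n$ cuts in the setup budget accommodate any auxiliary cuts required to realize a good sign assignment. Each halving round then invokes the following halving lemma: given a signed partition with discrepancy vector $d \in \mathbb{R}^n$ satisfying $\|d\|_\infty \leq \delta$, one can introduce $n$ further cuts (each subdividing an existing interval, with the two sub-pieces possibly assigned to different agents) so that the new discrepancy vector $d'$ satisfies $\|d'\|_\infty \leq \delta/2$.

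To establish the halving lemma constructively I would process the $n$ measures one at a time. For each $i$, use $\rho$ to locate within some existing interval a sub-piece of $\mu_i$-measure $|d_i|/2$, and reassign it from the over-allocated agent to the other, halving $|d_i|$. The main obstacle is cross-interference: the same sub-piece perturbs the discrepancies of the other $n-1$ measures by their $\mu_j$-mass on it. To guarantee that the cumulative cross-perturbation on any single measure stays below $\delta/2$, I would show — via a pigeonhole argument on the current partition, made effective by the interval and sum oracles — that for each $i$ the sub-piece can be chosen so that its $\mu_j$-mass is at most $\delta/(2n)$ for every $j \neq i$. The binary searches invoked in this step cost $\mathrm{poly}(n, \log(1/\epsilon))$ oracle queries per round, for an overall $\mathrm{poly}(n, \log(1/\epsilon))$ running time. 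After $T$ halving rounds the discrepancy is at most $2^{-T} \leq \epsilon$, yielding a proper solution within the claimed cut budget. The much better dependence on $1/\epsilon$ compared to the online bound in Theorem \ref{t31} comes precisely from the extra oracles: they let us detect and exploit interval-by-interval correlations between measures, which an online algorithm cannot do.
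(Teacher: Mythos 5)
Your overall architecture — an iterative scheme that halves the discrepancy each round at a cost of $n$ cuts, giving $O(n\log(1/\epsilon))$ cuts total — is the right shape, and matches the cut budget. But the halving lemma you propose to realize it is false as stated, and the pigeonhole argument you gesture at cannot be made to work, so there is a genuine gap.

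Concretely, you ask for a sub-piece $P$ in the over-allocated agent's share with $\mu_i(P)=|d_i|/2$ and $\mu_j(P)\le \delta/(2n)$ for every $j\ne i$. This can fail already for $n=2$. Suppose $d=(\delta,-\delta)$ and the measures are arranged so that on agent $1$'s share $\mu_2$ is a fixed proportion $c=\frac{1-\delta}{1+\delta}$ of $\mu_1$ (this is consistent with $\mu_1(U_1)=\frac{1+\delta}{2}$, $\mu_2(U_1)=\frac{1-\delta}{2}$), and symmetrically on agent $2$'s share. Then \emph{every} sub-piece of $U_1$ with $\mu_1(P)=\delta/2$ has $\mu_2(P)=c\delta/2\approx\delta/2$, not $\le\delta/4$; reassigning it zeroes $d_1$ but roughly doubles $|d_2|$. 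The measures can be arbitrarily correlated inside one agent's share, so there is no averaging/pigeonhole to invoke. Even at the level of arithmetic the budget does not close: if $|d_j|$ is halved to $\le\delta/2$ and the cumulative cross-perturbation is allowed to reach $\delta/2$, the result is bounded only by $\delta$, which is no decrease at all. Two further points you wave off are also nontrivial: extracting a sub-piece from the interior of an interval costs two cuts, not one, so the naive count is $2n$ cuts per round; and ``the remaining $n$ cuts accommodate any auxiliary cuts required to realize a good sign assignment'' is not an argument.

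The paper avoids all of this by never working measure-by-measure. It first cuts $[0,1]$ into $2n$ pieces of equal total mass under $\mu=\sum_i\mu_i$, records for each piece the vector of its $n$ individual measures, and observes (the standard fact about basic solutions of a linear program) that one can express $(1/2,\dots,1/2)$ as a $[0,1]$-combination of these vectors with at most $n$ coefficients strictly inside $(0,1)$, by repeatedly shifting along a null direction of the ``floating'' vectors until some coefficient hits $\{0,1\}$. Pieces with coefficient $0$ or $1$ are committed to a side; the $\le n$ floating pieces are split in half (in $\mu$-mass, $n$ new cuts) and the linear-algebra step is repeated. After $\lceil 1+\log_2(1/\epsilon)\rceil$ rounds each floating piece has $\mu$-mass $\le\epsilon/2$, and a short inductive cleanup (peel off one measure at a time, each time guaranteed small by averaging over the floating pool) finishes the assignment. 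The crucial difference from your plan is that all $n$ measures are balanced \emph{simultaneously} as one vector constraint at each step, which is exactly what neutralizes the cross-interference your scheme cannot control.
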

\vspace{0.2cm}

\noindent
\textbf{Remark:} The offline algorithm in Theorem \ref{t33} provides a
far lower number of cuts 
than the algorithm of Theorem \ref{t31}. It is interesting to note that for
$\epsilon$ constant, this algorithm makes $O(n)$ cuts and obtains $\le
\epsilon$ absolute discrepancy while doing so with $\le n$ cuts is a
PPAD-hard problem \cite{FG2}. \\

These two algorithms for type 2 approximation for $\epsilon$-Consensus
Halving are adaptable to the Necklace Halving problem. Throughout the
paper, for Necklace Halving, we use the notation $m = \max_{i \in [n]}
m_i$ where $m_i$ is the number of beads of color $i$. 
The results below bound the number of cuts these adapted algorithms make
to reach a proper solution.

\begin{theo}
\label{t34}
There exists an efficient, deterministic, online algorithm that, given
$I$, an instance of Necklace Halving, provides a proper solution, making
at most $O(m^{2/3} \cdot n (\log n)^{1/3})$ cuts.
\end{theo}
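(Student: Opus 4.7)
My plan is to reduce Necklace Halving to a continuous $\epsilon$-Consensus Halving instance, apply the algorithm of Theorem \ref{t31}, and then correct the residual bead-level imbalances. I would embed the necklace into $[0,1]$ by assigning each bead an equal subinterval, and define $\mu_i$ to be the uniform measure on the subintervals of color $i$, normalized so that $\mu_i([0,1])=1$; every color-$i$ bead then carries $\mu_i$-mass exactly $1/m_i$. Under this embedding, an absolute discrepancy of $\epsilon$ for $\mu_i$ translates into a bead-count discrepancy of at most $\epsilon m_i \leq \epsilon m$ for color $i$.

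The first step is to run the online algorithm of Theorem \ref{t31} on this instance with parameter $\epsilon$, adapted so that every cut is placed at a bead boundary rather than in the interior of a bead. Since the algorithm processes the necklace as a stream of bead boundaries and the measures $\mu_i$ are supported on the bead subintervals, this adaptation is natural and only affects the analysis by constant factors. This phase contributes $O(n\log n/\epsilon^2)$ cuts and guarantees a bead discrepancy of at most $\epsilon m$ per color.

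The second step is to interleave a rebalancing procedure with the stream. The algorithm maintains a running bead-discrepancy $d_i$ for each color, and whenever $|d_i|$ approaches the threshold at which the guarantee of Theorem \ref{t31} would no longer suffice for a proper solution, it reroutes a single bead of color $i$ to the minority agent at the cost of at most two additional cuts (one on each side of the isolated bead). Since the first phase leaves at most $\epsilon m_i$ units of bead-discrepancy per color to fix, this corrective step contributes at most $O(nm\epsilon)$ extra cuts in total.

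Balancing the two contributions, $n\log n/\epsilon^2 + nm\epsilon$ is minimized up to constants by choosing $\epsilon = \Theta\bigl((\log n/m)^{1/3}\bigr)$, which yields the stated bound of $O\bigl(nm^{2/3}(\log n)^{1/3}\bigr)$ cuts. The main obstacle I foresee is implementing the corrective step fully online: when a color's running discrepancy threatens the final target, the algorithm must be able to locate a bead of that color in the majority agent's current interval and isolate it in real time. I expect to handle this either by deferring each correction to the very next arrival of a bead of the offending color (so that the reroute amounts to a pre-emptive cut before assignment) or by maintaining a small rolling buffer of unassigned beads per color; verifying that either device respects both the cut budget and the streaming constraints will be the delicate point of the argument.
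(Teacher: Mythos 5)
Your high-level plan coincides with the paper's: embed the necklace into $[0,1]$ so each color-$i$ bead carries $\mu_i$-mass $1/m_i$, run the potential-minimization algorithm of Theorem \ref{t31} with $\epsilon = \Theta\bigl((\log n/m)^{1/3}\bigr)$ restricted to bead boundaries, and fix up the residual bead-level imbalance; and the arithmetic $n\log n/\epsilon^2 + n m\epsilon = O\bigl(n m^{2/3}(\log n)^{1/3}\bigr)$ is right. The gap is exactly the obstacle you flag at the end, and your proposed trigger does not close it. ``Whenever $|d_i|$ approaches the threshold at which the guarantee of Theorem \ref{t31} would no longer suffice'' is not a usable event: the first phase already keeps $|d_i|\leq \epsilon m_i$ at all times, so $|d_i|$ never ``approaches'' a failing threshold, yet when the stream ends $|d_i|$ may still equal $\epsilon m_i \gg 1$, and by then no cuts can be added. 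Triggering at a smaller constant threshold doesn't work either, since the potential-minimizing phase offers no bound on how many times $|d_i|$ can oscillate across it, so the corrective cut count could blow up. Your fallback of a rolling buffer of unassigned beads is ruled out by the online model, where each newly created interval must be allocated on the spot.

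The key idea you are missing is that the trigger must depend on the number of \emph{remaining} beads of a color, not on the current discrepancy; this is feasible online because the $m_i$ are part of the input, so the Balancer can count down remaining beads. Declare color $i$ \emph{critical} the moment fewer than $20\,m^{2/3}(\log n)^{1/3}$ beads of color $i$ remain; drop $\psi_i$ from the potential (which only decreases $\psi$, so the invariant survives), and from then on cut before and after every arriving bead of color $i$, giving it to the agent currently holding fewer beads of that color. This works because: at the moment of criticality the potential argument still bounds the color-$i$ bead discrepancy by $\epsilon m \leq 10\,m^{2/3}(\log n)^{1/3}$, which is smaller than the number of beads still to come, so the greedy phase drives the final imbalance to $0$ or $1$; and the forced cuts number at most $2n\cdot 20\,m^{2/3}(\log n)^{1/3}$, matching your $O(nm\epsilon)$ budget. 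Your ``defer to the next arrival'' device is in the right spirit but still needs this remaining-beads criterion to decide when to start deferring. One further technicality: if some $m_i < m^{2/3}(\log n)^{1/3}$, a single bead of color $i$ has mass $1/m_i$ exceeding the online model's per-step cap of $\epsilon^2/(100\log n)$; such colors must be declared critical from the outset.
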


\begin{theo}
\label{t35}
There exists an efficient, deterministic, offline algorithm that, given
$I$, an instance of Necklace Halving, provides a proper solution, making
at most $O(n \log m)$ cuts.
\end{theo}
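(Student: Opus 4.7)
The plan is to reduce Necklace Halving to an instance of $\epsilon$-Consensus Halving and invoke Theorem~\ref{t33}. I would first embed the necklace into $[0,1]$ by assigning each bead a subinterval of length $1/N$, where $N = \sum_i m_i$ is the total number of beads. For each color $i \in [n]$, define $\mu_i$ to be the uniform probability measure supported on the union of beads of color $i$; these are non-atomic probability measures on $[0,1]$. The oracle $\rho$, together with the sum and interval-mass extensions required by Theorem~\ref{t33}, can be implemented in polynomial time since each $\mu_i$ is piecewise constant with breakpoints at bead boundaries.

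Next, apply the algorithm of Theorem~\ref{t33} with $\epsilon = 1/m$, which uses $n(2 + \lceil \log_2 m \rceil) = O(n \log m)$ cuts and guarantees $|\mu_i(U_A) - \mu_i(U_B)| \leq 1/m$ for every $i$. To turn the continuous output into a valid Necklace Halving solution one must ensure every cut lies between two consecutive beads. The key observation is that if $x$ is a bead boundary and $\delta$ is an integer multiple of $1/m_i$, then any $y$ with $\mu_i([x,y]) = \delta$ can be chosen to be a bead boundary. Accordingly, I would run the algorithm with each $\delta$-parameter rounded to the nearest multiple of $1/m_i$ at query time, so that every cut it produces is between beads.

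Once all cuts are at bead boundaries, $\mu_i(U_A) = k_i/m_i$ for an integer $k_i$, and the absolute discrepancy becomes $|2k_i - m_i|/m_i$. The bound $|2k_i - m_i|/m_i \leq 1/m$ gives $|2k_i - m_i| \leq m_i/m \leq 1$, and since $|2k_i - m_i|$ is a nonnegative integer with the same parity as $m_i$, it must equal $0$ when $m_i$ is even and $1$ when $m_i$ is odd. Hence $k_i \in \{\lfloor m_i/2 \rfloor, \lceil m_i/2 \rceil\}$ for every color $i$, which is precisely a proper Necklace Halving, and the number of cuts is $O(n \log m)$ as claimed.

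The main technical obstacle is the interleaving of the $\delta$-rounding with the algorithm's internal logic: rounding each oracle query alters the effective measure increments slightly, and over $O(n \log m)$ queries these discretization errors could in principle accumulate and violate the tight $1/m$ discrepancy bound. The fix is to perform the rounding online, before each oracle call is issued, and to re-verify that the balance invariants underlying the proof of Theorem~\ref{t33} remain satisfied with respect to the rounded positions so that subsequent queries correct for, rather than amplify, earlier rounding. Once this bookkeeping is in place, the above discrepancy analysis closes the proof.
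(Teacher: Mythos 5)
Your reduction (embed beads as subintervals, take $\mu_i$ uniform on beads of color $i$, invoke Theorem~\ref{t33} with $\epsilon \approx 1/m$) is exactly the paper's starting point, and you correctly identify the central difficulty: the cuts produced by the continuous algorithm need not fall between beads. However, the fix you propose---rounding the $\delta$-parameter of each oracle call to a multiple of $1/m_i$ so that every cut lands on a bead boundary---does not work, and this is where the proof has a genuine gap.

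The algorithm of Theorem~\ref{t33} does not proceed by a sequence of single-measure queries of the form $\mu_i([x,y])=\delta$. Its cuts are determined by the \emph{aggregate} measure $\mu=\sum_i \mu_i$: it first splits $[0,1]$ into $2n$ pieces of $\mu$-value $1/2$, and in each subsequent round halves surviving pieces by $\mu$-value. When the $m_i$ differ, a bead of color $i$ has $\mu$-value $1/m_i$, and a prefix with total $\mu$-value exactly $1/2$ (or $1/4$, etc.) generically does not exist at any bead boundary --- for instance with $m_1=3,\ m_2=5$ one would need integers $a,b$ with $a/3+b/5=1/2$, which is impossible. So no rounding of $\delta$ in the single-measure queries aligns these cuts with beads, and you would instead have to redesign and reprove the linear-algebra/basic-solution machinery of Theorem~\ref{t33} under a discretized cut set. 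You flag this as ``bookkeeping,'' but it is exactly the missing core of the argument. The paper sidesteps the issue by post-processing the continuous output: it merges adjacent pieces given to the same agent, then shifts pairs of floating (interior-of-bead) marks of the same color simultaneously---which leaves both agents' shares unchanged---until at most one floating mark per color remains, and finally rounds that one mark per color to a bead boundary at a cost of at most $1$ in bead-discrepancy. Starting from a continuous discrepancy below $1/2$ this yields discrepancy at most $3/2$, which is an integer and hence at most $1$. That post-hoc shifting argument is what your proposal is missing, and it is not replaceable by per-query rounding.
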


In \cite{BL2} and \cite{BT} the authors describe offline
algorithms for an $\epsilon$ approximation version of
Necklace Halving and for $\epsilon$-Consensus Halving, making 
$O((\frac{1}{\epsilon})^{\Theta(n)})$ cuts. Our results here improve
these algorithms significantly. \\

The algorithmic results in the online model, and the nearly
matching lower bounds we establish appear
in the table below. Note
that the algorithms, for both the $\epsilon$-Consensus Halving
and Necklace Halving problems are optimal up to constant factors for any
fixed constant $n \geq 3$.
In the lower bounds for Necklace
Halving, we always assume that  $m_i = m$ for all $i \in [n]$. 

\begin{center}
 \begin{tabular}{ | m{4.0cm} | m{2.7cm} | m{3.0cm} | m{3.5cm} | } 
 \hline
 Problem & $n = 2$ measures & $n \ge 3, \enspace n = O(1)$ 
measures & $n$ measures (general case) \\ [0.5ex] 
 \hline
 Online $\epsilon$-Consensus Halving, upper bound & 
$O(\frac{1}{\epsilon^2})$ & $O(\frac{1}{\epsilon^2})$ & 
$O(\frac{n \log n}{\epsilon^2})$ \\ 
 \hline
 Online $\epsilon$-Consensus Halving, lower bound & 
$\Omega(\frac{1}{\epsilon})$ & $\Omega(\frac{1}{\epsilon^2})$ & 
$\Omega(\frac{n}{\epsilon^2})$ \\ 
 \hline
 
 Online Necklace Halving, upper bound & 
$O(m^{2/3})$ & $O(m^{2/3})$ & $O(m^{2/3} \cdot n (\log n)^{1/3} )$ \\
 \hline
 Online Necklace Halving, lower bound & 
$\Omega(\sqrt{m})$ & $\Omega(m^{2/3})$ & $\Omega(n \cdot m^{2/3})$ \\
 \hline
 
\end{tabular}
\end{center}

\setcounter{section}{1}

The structure of the rest of the paper is as follows: in 
Section 2 we describe the computational models for the offline and online
versions. In Sections 3 and 4 we present the algorithms for type 1
approximation and type 2 approximation, respectively. Subsection 4.1
contains the online algorithms, while subsection 4.2 contains the offline
algorithms. Section 5 contains the lower bounds for the online
model. In subsection 5.1 we deal with Online $\epsilon$-Consensus Halving,
and in subsection 5.2 with Online Necklace Halving. The final Section
6 contains remarks about possible extensions and open problems. To simplify
the presentation we omit all floor and ceiling signs throughout the
paper whenever these are not crucial. All logarithms are in base
$2$, unless otherwise specified.

\section{Computational models and online versions}
We first present the offline computational models, and then introduce the
online versions for both problems and their corresponding computation
models. In total, we have four models, one corresponding to each of
the combinations $\epsilon$-Consensus Splitting/Necklace Splitting and
online/offline. Each of these four models pertains to both types of
approximation. \\

The input for Necklace Splitting, for an instance with $k$ agents and $n$
colors, consists of a series of indices, each one taking a value in $[n]$,
which represents the color of the respective bead. The runtime
is, as usual,
the number of basic operations the algorithm makes to provide
a solution. \\

For $\epsilon$-Consensus Halving, we have an oracle $\rho$
that answers two types of queries. The first type
takes as input a measure index $i$, a positive quantity $\delta$,
and a starting point $x \in [0, 1]$ and returns the smallest point $y
\ge x$ so that $\mu_i([x, y]) = \delta$ if such a point exists or $1$
otherwise. The oracle can also take as input a starting point 
$x \in [0, 1]$ and positive quantity $\delta$ and
return the smallest $y \ge x$ so that
$\sum_{i=1}^n \mu_i([x, y]) = \delta$.
The second type of query takes as input
two points $0\le a < b \le 1$ 
and an index $i \in [n]$
and returns the value $\mu_i([a, b])$. 
In terms of runtime, we consider both the number of oracle
queries made and the computation done besides the queries. We seek
algorithms that are efficient in terms of both queries and computation. \\

It is worth mentioning that this computational model is not 
exactly  
the one used in \cite{FG}, but the two models are polynomially
equivalent. 
Next we discuss the online models, starting with Necklace
Splitting. The parameters 
$k$, $n$ and $m_i$ for $i \in [n]$ are given in
advance. 
The beads are revealed one by one in the following way: for  
integral $t, 0 \le
t \le \sum_{i \in [n]} m_i - 1$, at time $t$ the Balancer receives
the color of bead number $t+1$ and is given the opportunity to make a cut
between beads $t$ and $t + 1$, where this decision is
irreversible. If a cut is made, and $J$ is the newly created interval,
the Balancer also has to choose immediately the agent that gets
$J$,
before advancing to time $t + 1$. \\

The notion of time $t$ 
appears in the model for Online $\epsilon$-Consensus 
Splitting  too. Here $t$ moves from $0$ to $1$. 
Since a
continuous motion is an unpractical computational model, 
the set of possible values of $t$ where cuts are allowed
is  a discrete set of values.
This
set consists of a sequence of points $0 = x_1 < x_2 < ... < x_m = 1$,
where $m = \text{poly}(n, \frac{1}{\epsilon})$. 
At time $t = x_i$, the Balancer
receives access to the oracle $\rho$ on $[0, x_{i+1}]$ which can
provide the values of $\mu_j(x_s,x_r)$ for all $s<r \leq i+1$ and
each of the measures $\mu_j$,
and has to make
the irreversible decision of whether or not to cut at $x_i$. If he
chooses to cut, and $J$ is the newly created interval, the Balancer also
needs to decide on the spot which agent gets $J$.
After this decision is made, $t$ advances to 
$x_{i+1}$. Note that this means that cuts can only be made
at points $x_i, i \in [m]$. \\

In order to avoid pathological constructions where
there is too much measure in a small subinterval of $[0, 1]$ 
it is needed
to set an upper bound for each quantity 
$\mu_j([x_i, x_{i+1}])$. We set this upper bound
to be $\frac{\epsilon^2}{100 \log n}$ for $k=2$ and
$\frac{\epsilon^2}{100 k \log (nk)}$ for general $k$.  
Therefore, in the online
model the values of the measures  are provided on all members
of a partition of $[0,1]$ into 
polynomially many subintervals, where every measure of each
subinterval is not too large.

\section{Positive measures}
In this section, we present the proof of the result for type 1
approximation: \\

\noindent
{\bf Proof of Theorem \ref{t21}:}

The algorithm traverses the interval once and makes $kn$ marks on it,
creating $kn$ \textit{marked intervals}. Then, it chooses at most $n (k
- 1)$ of these marks for the cuts. More precisely, in the first stage,
the algorithm uses oracle calls to determine the points in $[0, 1]$
where the marks need to be made, and in the second stage determines at
which marks to make cuts.  

Let $x$ be the point on the interval up to which we have traversed
so far. Whenever we make a mark, the interval between the previous mark
and the one we just made is called a \textit{marked interval}, and it
receives a label corresponding to one of the measures, according to a rule
specified next. For each $i \in [n]$, call measure $i$ \textit{active}
if no more than $k - 1$ of the marked intervals got label $i$. If a
measure is not active at a certain point, ignore it
for the rest of the traversal. When none of the measures are
still active, stop the first stage of the algorithm. At
the beginning, $x = 0$, and all the $n$ measures are active. 

Suppose we are at a certain point $x < 1$, either the starting point
or some marked point, and there is at least one active measure. Let $y
\ge x$ be the smallest real number so that $\mu_i([x, y]) = \frac{1}{kn}$
for some $i$ that is an active measure. Mark the point $y$. The marked
interval $[x, y]$ receives label $i = \argmin_{i \enspace \text{active}}
\min \{y| \mu_i([x, y]) = \frac{1}{kn}\}$. Keep going by updating
$x$ to be $y$ until either all measures 
become inactive, or we run out of measure
and there is no $y$ that satisfies the condition above. 

We next prove that in this first phase, the algorithm does not run out
of measure, and it finishes when no measure is active, hence making $kn$
marks. Indeed, suppose that when the algorithm stops, measure $i$
is still active. Let $y_1, y_2, ..., y_p$ be the marked points until
then. Since each measure gets at most $k$ marked intervals labelled with
its index, we have that $p \le kn - 1$. Because label $i$ has always been
active, it follows that $\mu_i([0, y_1]) \le \frac{1}{kn}$ and for each
$j \in [p - 1]$, $\mu_i([y_j, y_{j + 1}]) \le \frac{1}{kn}$. This leads
to $\mu_i[0, y_p]) \le \frac{kn - 1}{kn}$, which means that $\mu_i[y_p,
1] \ge \frac{1}{kn}$, contradicting the fact that we have run out of
measure. 

Next, we present the second stage of the algorithm, which chooses which
ones of the $kn$ marks become cuts. To give each of the $k$ agents $1/kn$
of each measure, it is enough to give, for each $i \in [n]$, one of the
$k$ marked intervals of label $i$ to each agent. In other words, it 
suffices to split the labeled intervals evenly among the $k$ agents. 
This is equivalent to the Necklace Splitting Problem for $n$ colors and $k$
agents, where there are exactly $k$ beads of each color. Hence, for the
second stage of the algorithm, it is enough to prove the following:

\setcounter{theo}{1}
\begin{lemma}
\label{l21}
There exists an efficient algorithm that solves any instance of Necklace
Splitting for $n$ colors and $k$ agents where there are exactly $k$
beads of each color, making at most $n (k - 1)$ cuts.
\end{lemma}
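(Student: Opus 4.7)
The plan is to produce a one-pass greedy algorithm and then read off the cut bound $n(k-1)$ from a simple counting observation. Scanning the $nk$ beads of the necklace from left to right, I would maintain a \emph{current agent} $a$, initialized to $1$, together with a table recording which colors each agent has already received. When the next bead, of color $c$, is read, the rule is: if $a$ has not yet received a bead of color $c$, then assign the bead to $a$ and continue with no cut; otherwise place a cut immediately before the bead, update $a$ to be any agent that does not yet hold color $c$, and give the bead to the new current agent.

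For correctness I need two checks. First, a valid replacement agent always exists: when the $j$-th bead of color $c$ is processed, at most $j - 1$ agents hold $c$, and since there are only $k$ beads of that color, $j \le k$, so at least $k - (j-1) \ge 1$ agents are still free. Second, because the algorithm never assigns two beads of the same color to the same agent and eventually processes all $nk$ beads, the $k$ beads of each color are partitioned bijectively among the $k$ agents, which is exactly the equitable distribution demanded.

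It remains to bound the cuts. Cuts coincide with switches of the current agent, so I count switches color by color. For any fixed color $c$, the \emph{first} bead of color $c$ cannot cause a switch, since at that moment no agent holds $c$ and in particular the current agent does not. Therefore at most $k - 1$ of the $k$ beads of color $c$ can force a switch, and summing over the $n$ colors gives at most $n(k-1)$ cuts, matching the statement. The running time is clearly polynomial, for instance $O(nk^2)$ with a trivial list-based data structure. I do not foresee a real obstacle here; the only non-trivial point is realizing that the first occurrence of each color is automatically ``free'', which is what converts the naive $nk$ bound on switches into the tight $n(k-1)$.
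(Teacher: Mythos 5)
Your proof is correct, and the algorithm is genuinely different from the one in the paper, not just a rephrasing. The paper's algorithm cuts \emph{aggressively}: it places a cut between every pair of consecutive beads \emph{unless} the second bead is the first appearance of its color, and then assigns each resulting interval $J$ to an agent holding none of the colors of $J$ (such an agent exists because all beads of $J$ after the first are first appearances of their colors, hence not yet held by anyone). The cut count is then obtained by direct subtraction: $(kn-1)$ possible positions minus the $(n-1)$ exempted ones gives exactly $n(k-1)$ cuts. Your algorithm cuts \emph{lazily}: it keeps feeding the current agent until a duplicate color would occur, and only then cuts and switches. The count is obtained by a charging argument (each cut is charged to the bead that forced the switch; for each color, the first bead can never be the culprit, so at most $k-1$ cuts are charged to each color). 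The two algorithms produce different cut sets on the same input — on the necklace $1\,2\,2\,1$ with $k=2$, the paper's algorithm makes $2$ cuts while yours makes $1$ — so yours can make strictly fewer cuts while still meeting the worst-case bound. Both correctness arguments reduce to the same invariant (no agent ever holds two beads of the same color, so each agent ends up with exactly one bead of each color), but your allocation rule is simpler: you only need an agent not holding the \emph{single} color of the offending bead, whereas the paper must check the whole color set of the new interval. One small thing worth making explicit in your writeup: the assignment happens interval by interval (between consecutive cuts all beads go to the same agent), so your bead-by-bead description does produce a valid interval allocation; you implicitly rely on this, and it is true, but stating it closes the loop with the problem statement. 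Also note, as the paper does, that your one-pass procedure works online, which is used later in the paper.
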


\noindent
{\bf Proof of Lemma \ref{l21}:}

Traverse the necklace once bead by bead and cut between any pair of
consecutive beads unless the second one is the first appearance of a bead
of color $i$ for some $i \in [n]$. After each cut made, if $S$ is the set
of colors present in the newly created interval $J$, we allocate $J$
to an agent that has not received up to that point any beads of any color
in $S$. To show that after each cut such an agent exists, first note that
by the description above, 
no agent receives two beads of the same color. If
$J$ contains only one bead and its color is $i$, 
there must exist an agent
who has not received any bead of color $i$ up to that point, as there
are as many agents as beads of color $i$. If $J$ has $p \ge 2$ beads,
of colors $c_1, ..., c_p \in [n]$ appearing in this order, 
we can still give it to an agent that
has not received any bead of color $c_1$, since each of the other
beads in $J$ has a color that has not appeared before.

It thus follows that with this
allocation rule each agent
gets exactly $1$ bead of each color. To prove the upper bound on
the number of cuts, note that for each $i \in [n]$, we never cut right
before the first bead of color $i$ that appears on the necklace. Hence,
there are exactly  $n - 1$ beads (besides the very first one) 
with no cut
right before them. Since there are at most $kn - 1$ points between
consecutive beads
the algorithm makes  exactly
$kn- 1 - (n - 1) = n (k - 1)$ cuts.
\hfill $\Box$

Observe that the algorithm is polynomial in $n, k$. The first
stage makes $kn$ marks, each of which is done by finding
the smallest $y \ge x$ so that $\mu_i([x, y]) = \frac{1}{kn}$. To find
this $y$, we call the oracle $\rho(x,i, \frac{1}{kn})$, on all of the
active measures $i$, and take the smallest $y$ found. Hence, by the time
the first stage of the algorithm is over, we have made $O(n^2 \cdot k)$
oracle calls and $O(n^2 \cdot k)$ non-oracle operations. The proof
of Lemma \ref{l21}
shows that the second stage is also efficient, completing the proof.
\hfill $\Box$

\textbf{Remarks}:

\begin{itemize}

    \item The algorithm described in the proof of Lemma \ref{l21}
    traverses the necklace only once and distributes newly created
    intervals right after making cuts. Hence, this algorithm works in
    the online version for Necklace Splitting as well (for $k$ beads of
    each color).

    \item Since the algorithm that decides where to make cuts works
    in the online model, the algorithm for type 1 approximation can be
    adapted to the online model as well.

    \item Note that $\frac{1}{kn}$ tends to $0$ as $n$ tends to
infinity, that
    is, even if there are only two agents the algorithm
    does not ensure a constant amount bounded
    away from $0$ of each of the $n$ measures to each of them. 
    In the next section we describe efficient algorithms that allow
    more cuts and achieve better discrepancy.
\end{itemize}

\section{Upper bounds}

\subsection{Online algorithms}

\subsubsection{The $\epsilon$-Consensus Halving Problem}

\noindent
{\bf Proof of Theorem \ref{t31}:}

We describe an online algorithm that runs in polynomial time (in
$\frac{1}{\epsilon}$, $n$ and the input describing the 
measures) and achieves discrepancy at most $ \epsilon$. The algorithm
makes $O(\frac{n \log n}{\epsilon^2})$ cuts.
Note that by the result of 
Filos-Ratsikas and Goldberg, with only $n$ cuts, the problem of
obtaining discrepancy $\le \epsilon$, for $\epsilon$ inversely-polynomial
in $n$, is PPA-complete. By allowing more cuts we can get a
poly-time algorithm that achieves this discrepancy, and it even works
online. It is worth mentioning that this algorithm is a derandomization
of a simple randomized algorithm which cuts the interval into pieces
of small $i$-measure for all $i$ and then assigns them randomly and
uniformly to the two agents. 

Put $g=g(n,\epsilon) = \frac{\epsilon^2}{8 \log n}$. 
Traverse the interval, and whenever after the last cut made at a
point $x$  we reach a point $y$ so that
$[x, y]$ is valued at least $\frac{1}{2}
g$ by some measure and at most $g$ by all
other measures, we make a cut. Note that if we had access 
to the oracle $\rho$
on $[0, 1]$ described in 
the beginning, we could have simply set $y = \min_{i \in [n]}
\rho(x, i, g)$. However, in the online model, the Balancer
has no access to the oracle on all of $[0, 1]$, and can only
make cuts at the prescribed points
$x_i$. Hence, in this model it might happen that the
Balancer, having made the last cut at $x_p$, has to decide at time $x_j$
whether to cut or not, knowing that $\mu_i([x_p, x_j]) \leq g$
for every $i$ but that
$\mu_i([x_p, x_{j+1}]) > g$
for some $i$. When this
occurs, the Balancer simply cuts at $x_j$. By the assumption
that $\mu_i([x_j, x_{j+1}]) \le \frac{\epsilon^2}{100 \log n}$ for every
measure $i$, it follows  that in this 
case $\mu_i([x_p, x_j]) \ge \frac{\epsilon^2}{8
\log_2 n} - \frac{\epsilon^2}{100 \log n}>
\frac{1}{2} g$. 

To decide about the allocation of the interval created we define, for each
measure $i \in [n]$, a potential function $\phi_i(t)$, and a function
$\psi_i(t)$ that is an upper bound of $\phi_i$ and is computable 
efficiently.
The variable $t$ here will denote, throughout the algorithm,
the index of the last cut made. 
Define $\phi = \sum_{i = 1}^n \phi_i$ and $\psi =
\sum_{i = 1}^n \psi_i$. After each cut at step $t$, the allocation
of the interval created is chosen so as to minimize $\psi(t)$. 

The functions $\phi_i, \psi_i$ are defined by considering an 
appropriate probabilistic process.  For each $i
\in [n]$, let $X_i$ be the random variable whose value is the
difference between the $i$-th measure of the share of agent 1
and that of agent 2
if after each cut the interval created is assigned to a uniform
random agent. Let
$\epsilon_k$ be $1$ if the $k$'th interval is assigned to agent 1
and $-1$ otherwise. Therefore 
$X_i = \sum_{j = 1}^m \epsilon_j a_j$, where $m-1$ is
the total number of cuts made and $a_j = \mu_i(I_j)$, where $I_j$
is the
$j$'th created interval. The distribution defining $X_i$ is the
one where each $\epsilon_j$ is $1$ or $-1$ randomly, uniformly and
independently. 
The function $\phi_i(t)$ is defined as follows
$$
\phi_i(t) = \E{\frac{e^{\lambda X_i} + e^{- \lambda X_i}}{2} |
\epsilon_1, \epsilon_2, ..., \epsilon_t}$$

This is a conditional expectation, where the conditioning is on 
the allocation of the first $t$ intervals represented by 
$\epsilon_1, \ldots ,\epsilon_t$, and where
$\lambda=\frac{4 \log n}{\epsilon}$. (The specific choice of
$\lambda$ will become clear later). 
Since $X_i = \sum_j \epsilon_j a_j$,
where $a_j$ is the valuation of the $j$'th interval by measure $i$,
we have that $$\phi_i(t) = \E{\frac{e^{\lambda \sum_j \epsilon_j a_j}
+ e^{- \lambda \sum_j \epsilon_j a_j}}{2} | \epsilon_1, \epsilon_2,
..., \epsilon_t}$$.

The function
$\psi_i(t)$ is defined in a way ensuring 
it upper bounds the function $\phi_i(t)$. It is convenient to split each
$\phi_i(t)$ into 
$$
\frac{1}{2}\E{e^{\lambda \sum_j \epsilon_j a_j} | \epsilon_1,
.., \epsilon_t} + \frac{1}{2} \E{e^{-\lambda \sum_j \epsilon_j a_j} | \epsilon_1,
.., \epsilon_t}.
$$

For simplicity, denote the first term $\phi_i'$
and the second term $\phi_i''$. Therefore
$$ 
\phi_i'(t) 
= \frac{1}{2}\E{e^{\lambda \sum_j \epsilon_j a_j} | \epsilon_1, ..,
\epsilon_t} 
= \frac{1}{2} e^{\lambda \sum_{j = 1}^t \epsilon_j a_j} \cdot \prod_{j
\ge t + 1} (\frac{e^{\lambda a_j} + e^{- \lambda a_j}}{2}) 
$$ 
$$ 
= \frac{1}{2}e^{\lambda \sum_{j = 1}^t \epsilon_j a_j} 
\cdot \prod_{j \ge t + 1}
\cosh(\lambda a_j)
$$

A similar expression exists for $\phi_i''$. 
Define $s_t = \sum_{j = 1}^t a_j$ and $u_t = \sum_{j = 1}^t \epsilon_j
a_j$. By the discusson above
$$
\phi_i(t) =\frac{e^{\lambda u_t}+e^{-\lambda u_t}}{2}
\prod_{j \geq t+1} \cosh(\lambda a_j).
$$
Using the 
well-known inequality that $\cosh(x) \le e^{x^2 / 2}$, it follows
that 
$$
\phi_i(t) \leq \frac{e^{\lambda u_t}+e^{-\lambda u_t}}{2}
e^{\lambda^2 \sum_{j \ge t + 1} a_j^2 / 2}.
$$

By the way the cuts are produced
$a_j \le g$
for all $j$, and hence

$$
\sum_{j = t + 1} a_j^2 \le \max_{j \ge t + 1} (|a_j|) \cdot (\sum_{j
\ge t + 1} a_j) \le g \cdot (\sum_{j \ge t + 1} a_j) = g(1-s_t).
$$

Therefore
$$
\phi_i(t) \leq  \frac{e^{\lambda u_t}+e^{-\lambda u_t}}{2}
e^{\lambda^2 g (1-s_t)/2}.
$$

Define $\psi_i(t)$ to be the above upper bound for $\phi_i(t)$,
that is
$$
\psi_i(t) = \frac{e^{\lambda u_t}+e^{-\lambda u_t}}{2}
e^{\lambda^2 g (1-s_t)/2}.
$$
Note that $\psi_i(t)$  can be easily computed 
efficiently at time $t$, since 
$s_t$ and  $u_t$ (as well as $g$ and $\lambda$) 
are known at this point.

Recall that $\phi(t)=\sum_i \phi_i(t)$ and $\psi(t)=\sum_i
\psi_i(t)$. At time $t+1$, the online algorithm chooses 
$\epsilon_{t + 1}$ in order to minimize
$\psi(t+1)$. We next show that this ensures that 
$\psi(t)$ is (weakly)
decreasing in the variable $t$. To do so, it is enough to prove
that $\psi(t) \ge \frac{\psi(t+1 | \epsilon_{t+1} = 1) + \psi(t+1
| \epsilon_{t+1} = -1)}{2}$, where $\psi(t+1 | \epsilon_{t+1}
= \chi)$ denotes the value of $\psi(t+1)$ if we choose
$\epsilon_{t+1} = \chi \in \{-1, 1\}$. It 
suffices to show that for every measure
$i$, $\psi_i(t) \ge \frac{1}{2} [\psi_i(t+1 | \epsilon_{t+1}
= 1] + \frac{1}{2} [\psi_i(t+1 | \epsilon_{t+1} = -1]$. 

We proceed with the proof of this inequality. To do so, note that
$$
\psi_i(t+1 | \epsilon_{t+1}=1) =
\frac{e^{\lambda (u_t+a_{t+1}) }+e^{-\lambda (u_t+a_{t+1})}}{2}
e^{\lambda^2 g (1-s_t-a_{t+1})/2},
$$
and
$$
\psi_i(t+1 | \epsilon_{t+1}=-1) =
\frac{e^{\lambda (u_t-a_{t+1}) }+e^{-\lambda (u_t-a_{t+1})}}{2}
e^{\lambda^2 g (1-s_t-a_{t+1})/2}.
$$
Therefore
$$
\frac{\psi_i(t+1 | \epsilon_{t+1}=1)+\psi_i(t+1 |
\epsilon_{t+1}=-1)}{2} 
=
\frac{e^{\lambda u_t}+e^{-\lambda u_t}}{2} \cdot 
\frac{e^{\lambda a_{t+1}}
+e^{-\lambda a_{t+1}}}{2} 
e^{\lambda^2 g (1-s_t-a_{t+1})/2}
$$
$$
\leq 
\frac{e^{\lambda u_t}+e^{-\lambda u_t}}{2} \cdot
e^{\lambda^2 g a_{t+1}/2}
e^{\lambda^2 g (1-s_t-a_{t+1})/2}
=
\frac{e^{\lambda u_t}+e^{-\lambda u_t}}{2} \cdot
e^{\lambda^2 g (1-s_t)/2}=\psi_i(t),
$$
as needed.

The process of selecting $\epsilon_{t+1}$ so that
$\psi(t+1)$ is minimized is performed for every a new cut. Note that the
total number of cuts is bounded by $\frac{n}{\frac{1}{2} g(n, \epsilon)}
= 16 \frac{1}{\epsilon^2} n \log n$. Hence, this process takes polynomial
time in the quantities mentioned, and it splits the interval among the
two agents. 

Next, we prove that the absolute discrepancy is smaller than
$\epsilon$. Let $m$ be the number of cuts made.
Note that at $t=m$,
$\phi_i(m)=\psi_i(m)$ 
is not an expectation, but a realization of the random process
determined by following the algorithm. At the end
$\psi_i(m) = \frac{e^{\lambda x_i} + e^{- \lambda x_i}}{2}$,
where $x_i$ is the difference between the amount of measure $i$ of agent
1 and the amount of measure $i$ of agent 2 at the end of the
algorithm. If for some measure $i$ the absolute value of the
discrepancy is $ > \epsilon$, then $\phi_i(m) > \frac{e^{\lambda
\epsilon}}{2}$, which means that $\psi(m) \geq \phi(m) 
\geq \frac{e^{\lambda
\epsilon}}{2}$. Since $\psi(t)$ is decreasing in $t$, it is enough
to prove that $\psi(0) \le \frac{e^{\lambda \epsilon}}{2}$. Note that
$\psi(0) \leq n e^{\frac{1}{2} \lambda^2 g}$, hence it suffices to
prove that $n e^{\frac{1}{2} \lambda^2 g}
< \frac{e^{\lambda \epsilon}}{2}$, which is equivalent to $\frac{1}{2}
\lambda^2 g + \log_e (2n) < \lambda \epsilon$. Recall that
$\lambda = \frac{4 \log n}{\epsilon}$. The
inequality becomes $\log n + \log_e (2n) < 4 \log n$, which is
clearly true. Therefore, 
the algorithm obtains discrepancy at most $\epsilon$, it 
is poly-time, online,
and makes $O(\frac{n \log n}{\epsilon^2})$ cuts. This completes the
proof.

\hfill $\Box$

\subsubsection{Necklace Halving}
In this subsection we use the previous algorithm to prove 
Theorem \ref{t34} about online Necklace Halving. Note first that
if, say, $\log n>m/1000$ 
the result is trivial, as less than $nm$ cuts  suffice
to split the necklace into single beads, hence we may and will
assume that $m \geq 1000 \log n$. \\

\noindent
{\bf Proof of Theorem \ref{t34}:}
Given a necklace with $m_i$ beads of color $i$ for $1 \leq i \leq
n$, where $m=\max m_i$, 
construct an instance of $\epsilon$-Consensus
Halving as follows. Replace each bead of color $i$ by an interval
of $i$-measure $1/m_i$ and $j$-measure $0$ for all $j\neq i$. These
intervals are placed next to each other according to the order in
the necklace, and their lengths are chosen so that altogether they
cover $[0,1]$. During the online algorithm, the beads of
the necklace  are revealed one by one. Throughout the algorithm 
we call the beads that have not yet been revealed the 
{\em remaining beads}.

Without trying to optimize the absolute constants,
define $\epsilon=10 (\frac{\log n}{m}) ^{1/3}~$ ($\leq 1$). Our algorithm
follows the one used in the proof of Theorem \ref{t31}, but when
the number of remaining beads of color $i$ becomes smaller than
$20 m^{2/3} (\log n)^{1/3}$ the algorithm makes a cut before and
after each arriving bead of color $i$, allocating it to the agent
with a smaller number of beads of this type, where ties are broken
arbitrarily.  During the algorithm we call a color {\em critical}
if the number of remaining beads of this color is smaller than
$20 m^{2/3} (\log n)^{1/3}$, otherwise it is {\em normal}. Although
the input is now considered as the interval $[0,1]$ with $n$
continuous measures on it, we allow only cuts between intervals
corresponding to consecutive beads, and do not allow any cuts in
the interiors of intervals corresponding to beads. Note that if
$\frac{1}{m_i} \leq \frac{\epsilon^2}{100 \log n}$, that is,
$m_i \geq m^{2/3} (\log n)^{1/3}$, this is consistent with our
definition of the online model for the $\epsilon$-Consensus Halving
Problem. Otherwise, color $i$ is critical from the beginning, and
in this case we cut before and after every bead of color $i$.

Starting at $t=0$, 
as in the proof of Theorem \ref{t31}, define, for each color $i$, the
upper bound functions for the potential, $\psi_i$, and put $\psi =
\sum \psi_i$, where the sum at every point is only over the
normal colors $i$. If the last cut is at point $x \in [0, 1]$,
the next cut is made before the last bead 
whose corresponding position $y$ in
the $\epsilon$-Consensus Halving instance has the property 
that $\mu_i([x,
y]) \le g(n, \epsilon)$ for all $i$. The newly created interval is then
allocated on the
spot according to the choice
that minimizes $\psi$. 

This is done until some color $i$ becomes
critical. Note that until this stage, since $\psi(t)$ is (weakly)
decreasing, the absolute discrepancy does not exceed
$\epsilon$, implying that
the discrepancy in terms of beads on the necklace, for each
color $i \in [n]$, does not exceed $10 m^{2/3} (\log n)^{1/3}$. When
$i$ becomes critical, it stops contributing to the potential
function (which as a result, becomes smaller). From this time on
color $i$ is handled separately, 
the algorithm makes a cut before
and after any occurrence of it and allocates it to the agent with a
smaller number of beads of this color. As before, the 
newly created intervals of the beads of the other colors
are allocated in order to minimize the potential $\psi$.

It is clear that the potential $\psi(t)$ here is a decreasing
function of $t$, as in the previous proof. Therefore whenever a
color becomes critical the discrepancy in it in terms of beads
is at most $10 m^{2/3} (\log n)^{1/3}$ and as the number of
remaining beads of this color is larger, the process will end
with a balanced partition of the beads of each color $i$ between the
agents, allocating to each of them either $\lfloor m_i/2 \rfloor$
or $\lceil m_i/2 \rceil$ of these beads.

To bound the number of cuts the algorithm makes call a cut {\em
forced} if it is made before or after a bead of color $i$ when $i$
is critical. The number of non-forced cuts is clearly at most
$O(\frac{n \log n}{\epsilon^2}) = O(n (\log n)^{1/3}
\cdot m^{2/3})$. The number of forced cuts cannot exceed
$2 n \cdot 20 m^{2/3} \cdot (\log n)^{1/3}$. Hence,
the total number of cuts made is $O(n (\log n)^{1/3} \cdot m^{2/3})$. 

Since the algorithm clearly works online this completes the proof
of the theorem.

\hfill $\Box$

\subsection{Offline algorithms}

\subsubsection{The $\epsilon$-Consensus Halving Problem}

\noindent
{\bf Proof of Theorem \ref{t33}:}
Given $n$ non-atomic measures $\mu_i$ on the interval $[0,1]$ we
describe an efficient algorithm that cuts the interval 
in at most $n (2 + \lceil \log_2 \frac{1}{\epsilon} \rceil)$ places
and splits the resulting intervals into two collections $C_0, C_1$ so
that $\mu_i(C_j) \in [\frac{1}{2} - \frac{\epsilon}{2}, \frac{1}{2} +
\frac{\epsilon}{2}]$ for all $i \in [n], 0 \le j \le 1$.  Note, first,
that if the collection $C_1$ has the right amount according to each of
the measures $\mu_i$, so does the collection $C_0$, hence it is convenient
to only keep track of the intervals assigned to $C_1$. For each interval
$I \subset [0,1]$ denote $\mu(I)=\mu_1(I)+ \ldots + \mu_n(I)$. Thus
$\mu([0,1])=n$. Using $2n-1$ cuts  split $[0,1]$ into $2n$ intervals
$I_1,I_2, \ldots ,I_{2n}$ so that $\mu(I_r)=1/2$ for all $r$.  For each
interval $I_r$ let $v_r$ denote the $n$-dimensional vector $(\mu_1(I_r),
\mu_2(I_r), \ldots ,\mu_n(I_r))$. 

By a simple linear algebra argument, which is a standard fact about the
properties of basic solutions for Linear Programming problems,
one can
write the vector $(1/2, 1/2, \ldots ,1/2)$ as a linear combination
of the vectors $v_r$ with coefficients in $[0,1]$, 
where at most  $n$ of them are not
in $\{0,1\}$. For completeness, we include the proof, which also
shows that one can find coefficients as above efficiently.
Start with all coefficients being $1/2$. Call a coefficient which is 
not in $\{0, 1\}$ \textit{floating} and one in $\{0, 1\}$ \textit{fixed}.
Thus at the beginning all $2n$ coefficients
are floating. As long as there are more than $n$ floating coefficients,
find a nontrivial linear dependence among the corresponding vectors
and subtract a scalar multiple of it which keeps all floating coefficients
in the closed interval $[0,1]$ 
shifting at least one of them to the boundary
$\{0,1\}$, thus fixing it. 

This process clearly ends with at most
$n$ floating coefficients.
The intervals with fixed coefficients with value $1$ 
are now assigned to the collection $C_1$ and
those with coefficient $0$ to $C_0$. 
The rest of the intervals remain.
Split each of the remaining intervals into two intervals, each with
$\mu$-value $1/4$. 
We get a collection $J_1, J_2, \ldots ,J_m$ of $m \leq 2n$
intervals, each of them has the coefficient it inherits from its original
interval. Each such interval defines an $n$-vector as before, and the sum
of these vectors with the corresponding coefficients (in $(0,1)$) is
exactly what the collection $C_1$ should still get to have its total
vector of measures being $(1/2, \ldots ,1/2)$. 

As before, we can shift the coefficients until
at most $n$ of them are floating, assign the intervals with $\{0,1\}$ 
coefficients to the collections $C_0,C_1$ and keep at most $n$ intervals
with floating coefficients. Split each of those into two intervals of
$\mu$-value $1/8$ each and proceed as before, until we get at most $n$ 
intervals with floating coefficients, where the $\mu$-value 
of each of them is at most $\epsilon / 2$. This happens after 
at most 
$\lceil 1 + \log_2 (1/\epsilon) \rceil $ rounds. In the first one, we have
made $2n-1$ cuts and in each additional round at most $n$ cuts. 
Thus the total number of cuts is at most
$n(2+ \lceil \log_2 (1/\epsilon) \rceil)-1$. 

From now on we add no additional cuts, and show how to
allocate the remaining intervals to $C_0,C_1$.
Let $\II$ denote the collection of
intervals with floating coefficients. Then $|\II| \leq n$ and
$\mu(I)\leq \epsilon / 2$ for each $I \in \II$. This means that 
$$
\sum_{i=1}^n \sum_{I \in \II} \mu_i(I) \leq n \epsilon / 2.
$$
It follows that there is at least one measures $\mu_i$ so that
$$
\sum_{I \in \II} \mu_i(I)  \leq \epsilon / 2.
$$
We can think of the remaining floating coefficients as
the fraction of each corresponding interval that
agent 1 owns. 
Observe that for any
assignment of the intervals $I \in \II$ to the two collections
$C_0,C_1$, the total $\mu_i$ measure of $C_1$ (and hence also of
$C_0$) lies
in $[1/2- \epsilon/2, 1/2+ \epsilon/2]$, as this measure with the floating
coefficients is exactly $1/2$ and any allocation of the intervals
with the floating coefficients changes this value by at most
$\epsilon / 2$. We can thus ignore this measure, for  ease of
notation assume it is measure number $n$, and
replace each measure vector of the members in $\II$ by a vector
of length $n-1$ corresponding to the other $n-1$ measures. 
If $|\II|>n-1$ (that is, if $|\II|=n$), 
then it is now possible to shift the floating coefficients
as before until at least one of them reaches the boundary, fix it
assigning its interval to $C_1$ or $C_0$ as needed, and
omit the corresponding interval from $\II$ ensuring its size is
at most $n-1$. This means that for the modified $\II$ the sum
$$
\sum_{i=1}^{n-1} \sum_{I \in \II} \mu_i(I) \leq (n-1) \epsilon / 2.
$$
Hence there is again a measure $i$, $1 \leq i \leq n-1$ so that
$$
\sum_{I \in \II} \mu_i(I)  \leq \epsilon / 2.
$$
Again, we may assume that $i=n-1$, observe that measure 
$n-1$ will stay in its desired range for any future allocation 
of the remaining intervals, and replace the measure vectors by ones
of length $n-2$. This process ends with an allocation of all
intervals to $C_1$ and $C_0$, ensuring that at the end
$\mu_i(C_j) \in [1/2-\epsilon/2,1/2+\epsilon/2]$ for all $1 \leq i \leq n$,
$0 \leq j \leq 1$. These are the desired collections. It is clear
that the procedure for generating them is efficient, requiring
only basic linear algebra operations. This completes the proof of the
theorem.  \hfill  $\Box$
\vspace{0.2cm}

\noindent
{\bf Remark:}\, 
The above Theorem shows that $3n$ cuts suffice for $\epsilon
=1/2$ (which ensures both agents get at least $1/4$ of each measure). A simple different choice of parameters implies that $(2 +
\delta)n$ cuts suffice in order to ensure that each of the two collections
$C_i$ satisfies $\mu_i(C_j) \geq \frac{\delta}{4 + 2 \delta}$ for all
$1 \leq i \leq n$ and $j \in \{0,1\}$.
\vspace{0.2cm}

\noindent
{\bf Remark:}\, 
The argument can be extended to splitting into $k$ nearly fair collections
of intervals. One way to do it is to generate the collections one by
one. See section 6 for more details.

\subsubsection{Necklace Splitting}
In this subsection, we present the Necklace Splitting algorithm obtained
by adapting the algorithm in the proof of Theorem \ref{t33}. \\

\noindent
{\bf Proof of Theorem \ref{t35}:}

Convert the given necklace into an instance of
$\epsilon$-Consensus Halving as done in the proof of Theorem
\ref{t34} and
mark the places of the cuts made by
the algorithm 
in the proof of Theorem \ref{t33} applied to the resulting
input with $\epsilon = \frac{1}{2m}$. The intervals separated by
the marks are partitioned by the algorithm 
into two collections forming a solution of the continuous problem.
Note that the continuous solution would give 
discrepancy at most
$\max_{i \in [n]} m_i \cdot \epsilon \le 1/2$ in terms of beads
if we were allowed to cut
at the marked points. 
The only subtle point is that some of the marks may be in the interior
of small intervals corresponding to beads, and we wish to
cut only between beads.  

Call a mark
between two consecutive beads \textit{fixed} and call the other
marks
\textit{floating}. We first show how to shift each of the floating marks
so that the absolute discrepancy 
does not increase beyond $1/2$ and all but at most one mark for each
color are made between
two consecutive beads. To do so, if there exists a floating mark
between two intervals assigned to the same agent
eliminate it and
merge the two intervals.
If there is no such mark and there are at least two
floating marks in the interior of intervals corresponding to color
$i$, we shift both of them by the same amount in the appropriate way 
until at least one of them becomes fixed. If during
this simultaneous shift one of the two marks arrives in a spot occupied
by a different mark, we stop the shift and discard one of the duplicate
marks. Note that the quantities the two agents receive do not change. 

This procedure reduces the number of floating marks 
until there is at most one floating mark for each color.
If there is such a floating mark, round it to the closest boundary
between beads noting that this can increase the absolute
discrepancy by at most $1$.
Therefore, once all marks are fixed, the absolute 
discrepancy is $\le 3/2$. Since all the cuts 
are between consecutive beads, this discrepancy 
has to be an integer, and thus it
is at most $1$, as desired. The number
of cuts made is $\le n (2 + \lceil \log_2 \frac{1}{\epsilon}
\rceil) =
n (3 + \lceil \log_2 m \rceil) = O(n \log m)$.

\hfill $\Box$

\section{Lower bounds}
In this section we present the lower bounds  
for $\epsilon$-Consensus 
Halving and Necklace Halving in the online model.

\subsection{The $\epsilon$-Consensus Halving Problem}

\subsubsection{Punishment and its application}
In this subsection we prove a $\Omega(\frac{1}{\epsilon})$ lower bound
on the number of cuts the Balancer needs to make in the online
model in order to
obtain a proper solution for $n = 2$. The proof relies on the idea of
punishing the Balancer if he allows too much discrepancy on one
measure at
any point. The next claim is the crux of the argument. It shows
that if we keep one measure unused as a punishing threat,
we cannot have, after any cut made, 
discrepancy exceeding $2 \epsilon$ on any other measure. Indeed, otherwise,
there exists a punishment that prevents the Balancer from achieving
$\epsilon$ discrepancy in the end with any finite number of cuts.
Recall that the  discrepancy on measure $i$ during the algorithm is
the difference between the $i$-th measure of the share of agent 1
and that of agent 2. 
\begin{lemma}
\label{l41}

Denote by $x$ the discrepancy on measure $1$
after the last cut made, and assume that measure $2$ of the shares
allocated  so far is $0$.
If $|x| \ge 2 \epsilon$, then there exists an adversarial input so
that no finite number of cuts can achieve an absolute discrepancy 
at most $\epsilon$ at the end.

\end{lemma}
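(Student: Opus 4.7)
The plan is to exhibit an adversarial continuation on the remaining interval $[t,1]$ in which $\mu_1$ and $\mu_2$ are everywhere proportional. This couples the two discrepancies for every Balancer allocation and renders the pair of $\epsilon$-targets geometrically incompatible, independently of how many subsequent cuts are made.

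Without loss of generality I would assume $x \ge 2\epsilon$ (the case $x \le -2\epsilon$ is handled by swapping the two agents). Write $u_1, u_2 \ge 0$ for the amounts of measure $1$ held by the two agents, so that $u_1 - u_2 = x$ and $s := \mu_1([0,t]) = u_1 + u_2 \ge u_1 \ge x$. Since the measure $2$ of the allocated shares is $0$, one has $\mu_2([t,1]) = 1$. The adversary now declares $\mu_2$ on $[t,1]$ to be uniform with total mass $1$, and declares $\mu_1 = c \cdot \mu_2$ on $[t,1]$ for the scalar $c := 1 - s$; this is consistent because $\mu_1([t,1]) = 1-s$. Subdividing $[t,1]$ into $\mathrm{poly}(n,1/\epsilon)$ equal-length cells $[x_j, x_{j+1}]$ makes each cell carry at most $\epsilon^2/(100 \log n)$ in either measure, so the construction is admissible in the online model of Section 2.

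For any partition $(A_1, A_2)$ the Balancer produces, set $v_i := \mu_1(A_i \cap [t,1])$ and $w_i := \mu_2(A_i \cap [t,1])$. Proportionality gives $v_i = c\, w_i$, and the two final discrepancies become
\[ \Delta_1 = x + c(w_1 - w_2), \qquad \Delta_2 = w_1 - w_2. \]
A proper solution requires $|\Delta_2| \le \epsilon$, which confines $c(w_1 - w_2)$ to $[-c\epsilon, c\epsilon]$, and $|\Delta_1| \le \epsilon$, which forces $c(w_1 - w_2) \in [-x-\epsilon, -x+\epsilon]$. For these two intervals to meet one needs $c\epsilon \ge x - \epsilon$, i.e.\ $c \ge (x-\epsilon)/\epsilon$. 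For $x \ge 2\epsilon$ this demands $c \ge 1$, whereas $c = 1 - s \le 1 - x \le 1 - 2\epsilon < 1$. Contradiction; the conditions are incompatible no matter how many cuts are made or how the pieces are distributed.

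The one step that I expect to require the most care in a written-out proof is checking that the proportional construction is genuinely legal in the online model used by the paper, namely that the adversary can commit in advance to the values $\mu_i([x_j, x_{j+1}])$ without exceeding the per-cell cap $\epsilon^2/(100 \log n)$ and without reacting to Balancer decisions on previously revealed cells. This is handled by choosing sufficiently many (but only polynomially many) grid points $x_j$ and declaring the uniform--proportional pair all at once; the adversary need not be adaptive. Everything else reduces to the one-line algebraic manipulation from $v_i = c\, w_i$ above.
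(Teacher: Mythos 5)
Your proposal is correct and uses essentially the same adversarial construction as the paper: both remaining measures are laid out proportionally (uniformly) on $[t,1]$, so that any allocation of later pieces shifts the two discrepancies by a fixed ratio $c<1$, making it impossible to repair an initial discrepancy of size $\ge 2\epsilon$ on measure $1$ while keeping measure $2$ within $\epsilon$. The only cosmetic difference is that you phrase the contradiction as the incompatibility of two intervals for $c(w_1-w_2)$, while the paper writes the same fact as a chain of inequalities; your explicit remark on satisfying the per-cell cap in the online model is a correct detail the paper leaves implicit.
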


\noindent
{\bf Proof of Lemma \ref{l41}:}

Assume, without loss of generality,  
that $x$ is positive, thus $x \ge 2 \epsilon$. Let $t \in (0,
1)$ be the point of the last cut made in $I = [0, 1]$. 
Distribute the rest of measures
$1, 2$ uniformly on $(t,1]$.
Suppose that a finite number of cuts, at points $t =
t_0 < t_1 < t_2 < ... < t_k < 1=t_{k+1}$ is made and the resulting
intervals are allocated to the agents with maximum discrepancy
below $\epsilon$. Denote $J_i = [t_i, t_{i+1}]$ and let $\epsilon_i$
be the sign attributed to each interval, defined to be 
$+$ if it is given to agent $1$
and $-$ if given to agent $2$. Since the discrepancy on measure $2$
is below $\epsilon$ in absolute value, it follows that
$|\frac{1}{1-t} \sum_{i
= 0}^k \epsilon_i l(J_i)| < \epsilon$, 
where $l(J_i)=t_{i+1}-t_i$ is the length
of $J_i$. However, the condition on measure $1$ yields
the inequality $x + \frac{\mu}{t-1} \sum_{i = 0}^k 
\epsilon_i l(J_i) < \epsilon$,
where $\mu$ is the remaining quantity of measure $1$ after 
the cut at $t$. Since $\mu < 1$, this leads to 
$$
\epsilon> x + \frac{\mu}{t-1} \sum_{i = 0}^k \epsilon_i l(J_i)
> x - \mu \epsilon > \epsilon,
$$
contradiction.

\hfill $\Box$

A simple application of
this lemma implies a $\Omega(\frac{1}{\epsilon})$ lower bound
for $n = 2$ measures.

\begin{theo}
\label{t41} 
There exists an adversarial input that forces any deterministic algorithm
for Online $\epsilon$-Consensus Halving with $n = 2$ measures to make
$\Omega(\frac{1}{\epsilon})$ cuts in order 
to obtain a proper solution.
\end{theo}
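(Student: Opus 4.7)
The plan is to force many cuts by constructing an adversarial input that places all of $\mu_1$-mass in one half of the interval and all of $\mu_2$-mass in the other, then to invoke Lemma \ref{l41} to rule out any algorithm that leaves significant $\mu_1$-discrepancy on assigned intervals before ever touching the $\mu_2$-region. Concretely, I would let $\mu_1$ be the uniform distribution on $[0, 1/2]$ and $\mu_2$ the uniform distribution on $[1/2, 1]$, and choose the discretization points $x_j$ to be equally spaced with gap at most $\epsilon^2/200$, so that $\mu_i([x_j, x_{j+1}]) \le \epsilon^2/100$, which is consistent with the online model's per-slot bound for $n=2$. Because the online model only reveals the measures slightly ahead of the current time, the adversary retains the freedom to deviate from this plan and instead execute the Lemma \ref{l41} construction whenever the Balancer over-commits on $\mu_1$.

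For an arbitrary deterministic algorithm that makes cuts at $c_1 < c_2 < \ldots < c_k$, yielding assigned intervals $I_j = [c_j, c_{j+1}]$ (with $c_0 = 0$, $c_{k+1} = 1$) and signs $\epsilon_j \in \{-1, +1\}$, I would let $j^*$ be the largest index such that $I_0, \ldots, I_{j^*-1}$ are all contained in $[0, 1/2]$, so precisely these assigned intervals carry zero $\mu_2$-mass. The key step is to apply Lemma \ref{l41} right after each cut $c_j$ with $j \le j^*$: at that moment the hypotheses of the lemma are met, so the running $\mu_1$-discrepancy $D_j = \sum_{i=0}^{j-1} \epsilon_i \mu_1(I_i)$ must satisfy $|D_j| < 2\epsilon$, since otherwise the adversary can redistribute the remaining $\mu_1$ and all of $\mu_2$ uniformly on $(c_j, 1]$ and prevent any proper completion. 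Then I would note that, since $I_{j^*}$ is the first interval to cross into $[1/2, 1]$, it must absorb all of the residual $\mu_1$-mass $1 - S_{j^*}$, where $S_{j^*} = \sum_{i<j^*} \mu_1(I_i)$, while every later interval $I_{j^*+1}, \ldots, I_k$ lies in $[1/2, 1]$ and contributes nothing to the $\mu_1$-discrepancy.

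The counting step then proceeds as follows. The final $\mu_1$-discrepancy equals $D_{j^*} + \epsilon_{j^*+1}(1 - S_{j^*})$, and properness requires its absolute value to be below $\epsilon$; combined with $|D_{j^*}| < 2\epsilon$, this forces $1 - S_{j^*} \le |D_{j^*}| + \epsilon < 3\epsilon$, i.e., $S_{j^*} > 1 - 3\epsilon$. On the other hand, the bounds $|D_j|, |D_{j-1}| < 2\epsilon$ together with $|D_j - D_{j-1}| = \mu_1(I_{j-1})$ yield $\mu_1(I_{j-1}) < 4\epsilon$ for every $1 \le j \le j^*$, whence $1 - 3\epsilon < S_{j^*} < 4\epsilon \cdot j^*$ and consequently $j^* > (1 - 3\epsilon)/(4\epsilon) = \Omega(1/\epsilon)$, giving $k \ge j^* = \Omega(1/\epsilon)$. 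The main subtlety I expect to work through is the two-phase nature of the adversary (planned input versus the Lemma \ref{l41} fallback) and verifying that each phase respects the online model's per-slot mass constraint; the boundary case $c_{j^*} = 1/2$ (in which $1 - S_{j^*} = 0$ and $S_{j^*} = 1$) is handled uniformly since the per-interval bound $\mu_1(I_{j-1}) < 4\epsilon$ already forces $j^* > 1/(4\epsilon)$ to account for the unit of $\mu_1$-mass.
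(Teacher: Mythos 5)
Your proof is correct and takes essentially the same approach as the paper: reserve $\mu_2$ as a punishment threat via Lemma \ref{l41} to force $|D_j| < 2\epsilon$ after each cut while only $\mu_1$-mass has been allocated, then conclude that since each interval absorbs at most $4\epsilon$ of $\mu_1$-mass, the Balancer needs $\Omega(1/\epsilon)$ cuts to cover the unit of $\mu_1$-mass. The paper's version is slightly more minimalist (it adaptively reveals $\mu_1$ with density $1$ until punishment becomes necessary, rather than committing to a half-half split), but the core argument and the counting are the same.
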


\noindent
{\bf Proof of Theorem \ref{t41}:}
Keep measure $2$ reserved for a potential punishment if the discrepancy
on measure $1$ ever exceeds $2 \epsilon$ in absolute value. At the
beginning, and after each cut, as long as $|x| < 2 \epsilon$, set
$\mu_1$ to be uniform with density $1$ on the 
portion that follows, until
the next cut made. If the Balancer waits for length larger than 
$4 \epsilon$
before cutting, then when the next cut is made and the resulting
interval allocated, the new discrepancy on
measure $1$ is at least  $2 \epsilon$. By the previous lemma
in this case the adversary can ensure that the 
Balancer will not be able to obtain the desired bound for the maximum
discrepancy. Hence, the distance between any two 
consecutive cuts is less
than $4 \epsilon$, yielding the required 
$\Omega(\frac{1}{\epsilon})$ lower bound.

\hfill $\Box$

Note that the upper bound provided by our online  algorithm for
$n=2$ measures is larger by a factor 
of $\Theta(\frac{1}{\epsilon})$ than this lower bound.
We next show that with one additional measure we can obtain a tight
lower bound, up to a constant factor.

\subsubsection{The case $n \ge 3$}
We first prove that for $n = 3$ measures we have a tight lower bound up to
a constant factor. For $n>3$ measures this will imply a lower bound
which is tight up to a 
$\Theta(\log n)$ factor.
\begin{theo}
\label{t42} 
There exists an adversarial input that forces any deterministic
algorithm for Online $\epsilon$-Consensus Halving for $n = 3$
measures to make
$\Omega(\frac{1}{\epsilon^2})$ cuts.
\end{theo}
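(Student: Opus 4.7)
The plan is to extend the punishment argument of Theorem \ref{t41} from $n=2$ to $n=3$ by reserving the third measure as a pure threat (in the sense of Lemma \ref{l41}) and forcing a two-dimensional discrepancy problem on the other two measures. As long as no mass of measure $3$ has been allocated, Lemma \ref{l41} applied to each of measures $1$ and $2$ with measure $3$ in the role of the reserve forces the Balancer to maintain $|x_1| < 2\epsilon$ and $|x_2| < 2\epsilon$ at every step, where $x_i$ denotes the running discrepancy on measure $i$; otherwise the adversary could commit all remaining mass of measure $3$ uniformly on the remaining portion of $[0,1]$ and prevent the Balancer from achieving absolute discrepancy $\leq \epsilon$ at the end.

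Next I would establish a companion constraint by redoing the computation of Lemma \ref{l41} with both active measures simultaneously. If the adversary keeps equal remaining masses $r_1 = r_2 = r$ of measures $1$ and $2$ and commits to distribute them uniformly over the remaining subinterval $(t,1]$, the two feasibility inequalities $|x_i + \frac{r}{1-t} S| \leq \epsilon$ for $i = 1,2$ on the future signed length sum $S$ have empty intersection exactly when $|x_1 - x_2| > 2\epsilon$. Thus the Balancer must also keep $|x_1 - x_2| \leq 2\epsilon$ throughout, confining the state $(x_1, x_2)$ to a hexagon of diameter $O(\epsilon)$.

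With these constraints in hand, the adversary presents a stream of small subintervals, each of total measure $\Theta(\epsilon^2 / \log n)$ (the maximum allowed by the online model), splitting the mass between measures $1$ and $2$ adaptively; the third measure is not used. Each Balancer cut signs the currently pending block, whose measure vector $(A,B)$ therefore lies in the hexagon scaled to size $O(\epsilon)$, so a single cut moves the state by only $O(\epsilon)$. The claim is that any online Balancer must make $\Omega(1/\epsilon^2)$ cuts to keep the state inside the hexagon at every step. The cleanest route is via Yao's principle: let the adversary present a random sequence of subintervals in which each one places its entire mass (of size exactly $\delta = \Theta(\epsilon^2 / \log n)$) on measure $1$ or on measure $2$, independently with probability $1/2$ each. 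For any deterministic online Balancer producing blocks with $(P_\ell, Q_\ell)$ subintervals of each type, the running discrepancy on measure $1 - $measure $2$ equals $\delta \sum_{\ell} \sigma_\ell (P_\ell - Q_\ell)$; since the $P_\ell - Q_\ell$ are conditionally independent zero-mean with variance $P_\ell + Q_\ell$, an anti-concentration / Khintchine-type argument shows that keeping this quantity within $[-4\epsilon, 4\epsilon]$ throughout forces each block to have size $O(\epsilon^2 / \delta)$, yielding at least $\Omega(1/\epsilon^2)$ blocks.

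The main obstacle is making the aggregation lower bound rigorous against a clever online Balancer who may use the first measure's discrepancy to help control the second. The key technical point will be showing that no online signing strategy can simultaneously control all three of $x_1$, $x_2$, and $x_1 - x_2$ inside their $O(\epsilon)$-windows when the per-subinterval ``noise'' in the direction $x_1 - x_2$ is irreducible and of order $\delta \sqrt{k}$ for blocks of size $k$; combining this with the constraint $k \cdot \delta \leq O(\epsilon)$ coming from $|x_1|,|x_2| < 2\epsilon$ gives the desired quadratic bound $\Omega(1/\epsilon^2)$ on the number of cuts after an application of Yao's principle to obtain a deterministic adversarial input.
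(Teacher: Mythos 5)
Your first step, reserving measure $3$ as a punishment threat so that Lemma \ref{l41} confines the state $(x_1,x_2)$ to a box of side $O(\epsilon)$, is exactly what the paper does. After that you diverge: the paper's proof is a fully deterministic potential-function argument, not a randomized one. It defines $M(x,y)=x^2+y^2+5\epsilon x-5\epsilon y$, and after each cut the adversary distributes the next stretch of mass on measures $1$ and $2$ in the precise proportion $(p_1:p_2)\propto(10\epsilon-4y,\,10\epsilon+4x)$ (legal because $|x|,|y|<2\epsilon$). This choice makes the linear cross-terms in $M(v\pm p)-M(v)$ cancel identically, so that \emph{whichever agent} the Balancer picks, $M$ increases by $p_1^2+p_2^2\ge\frac12(p_1+p_2)^2$. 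Coupled with the a priori bound $M\le 28\epsilon^2$ from the box constraint and Cauchy--Schwarz over the cut lengths, this gives $r\ge\Omega(1/\epsilon^2)$ directly, with no appeal to Yao's principle or anti-concentration.

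Your randomized route has genuine unaddressed gaps, some of which you flag yourself. First, the online model lets the Balancer see the measures up to $x_{i+1}$ before deciding whether to cut at $x_i$, so the Balancer can adaptively choose block boundaries as a function of the realized coin flips; the quantities $P_\ell-Q_\ell$ are then stopped sums of an adaptive stopping rule, and it is not immediate that each has conditional variance $P_\ell+Q_\ell$ in the way a Khintchine bound needs. You would have to set up a proper martingale or optional-stopping argument here, and this is precisely what the deterministic potential sidesteps. Second, your ``companion constraint'' $|x_1-x_2|\le 2\epsilon$ is derived under the hypothesis that the remaining masses of measures $1$ and $2$ stay equal, but the random input you then propose (each subinterval putting all its mass on one of the two measures) does not preserve that equality, so the hexagon confinement is not actually established for your adversary. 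Third, Yao's principle is not needed and introduces friction: the existence of a hard fixed input follows once you have a strategy that is hard against every deterministic Balancer, which the paper's adaptive deterministic adversary provides outright. In short, the skeleton (punishment measure, quadratic growth, $O(\epsilon^2)$ budget) matches the paper, but the engine driving the quadratic growth is left as an acknowledged obstacle in your write-up, whereas the paper solves it with the explicit potential and proportional-distribution trick.
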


\noindent
{\bf Proof of Theorem \ref{t42}:}
The proof applies the punishing argument given in Lemma
\ref{l41}. Measure number $3$ will be kept for possible 
punishment. We start with some notation and
definitions. After each cut at point $t$, let
$x_t, y_t$ denote the discrepancies (positive or negative) for measures
$1$ and $2$ respectively. The state after each such cut is 
represented by the
two dimensional vector $v_t = (x_t, y_t)$. After a new cut is
made and a new interval $J$ is formed, we view the interval as a two
dimensional vector $p = (p_1, p_2)$, where $p_i = \mu_i(J)$.
By
Lemma \ref{l41}, we may and will assume that $v_t$ is 
in the square
$[-2 \epsilon , 2 \epsilon] \times [-2 \epsilon , 2 \epsilon]$ after each
cut. The adversary tries to reveal online an input that forces the
Balancer to make many cuts to ensure that after each cut $v$ lies
in this square.
In order to analyze the progress we maintain
during the algorithm a potential function
$M(x, y)=M(v)$, which is defined in what follows.

After each cut and interval
allocation made by the Balancer, the adversarial input will consist
of measures distributed according to the
proportions $\gamma=\frac{p_1}{p_1 +
p_2}$ and $1-\gamma=\frac{p_2}{p_1 + p_2}$. 
The choice  of $\gamma$ will be made in order to ensure that
both $M(v + p) - M(v)$
and $M(v-p) - M(v)$ are large for any future
large interval. Note that if there is not enough measure of type
$1$ or $2$ left, it may be needed to limit the length of the
interval in which the measures will be distributed according to
the above proportions. 
It is convenient to define, after any cut made at point $t$, 
the \textit{feasible prefix} as the interval $[0, \ell]$, with $\ell$
being the maximum real so that the adversary can still distribute
the measures in $(t,\ell]$ according to the required proportions
without running out of measure.

The potential function is defined by
$$
M(x, y) = x^2 + y^2 + 5 \epsilon x - 5 \epsilon y.
$$ 

Let $p = (p_1, p_2)$ be the vector corresponding to the new
cut made, following the cut made at time $t$, where
$v_t=(x,y)$. Put $\alpha =
p_1 + p_2$. Assuming that $(p_1, p_2)$ are proportional to
$(10 \epsilon - 4y, 10 \epsilon + 4x)$, which is a vector with
positive coordinates as $|x|, |y| <
2 \epsilon$, we get that $$M(v + p) - M(v) = p_1^2 +
p_2^2 + 2x p_1 + 2 y p_2 + 5 \epsilon p_1 - 5 \epsilon p_2 = p_1^2 + p_2^2
+ \frac{1}{2} [p_1 \cdot (10 \epsilon + 4x) - p_2 \cdot (10 \epsilon -
4y)] = $$ $$= p_1^2 + p_2^2 \ge \frac{1}{2} \alpha^2$$

and similarly, 

$$M(v - p) - M(v) = p_1^2 + p_2^2 + \frac{1}{2} [-
p_1 \cdot (10 \epsilon + 4x) + p_2 \cdot (10 \epsilon - 4y)] = p_1^2 +
p_2^2 \ge \frac{1}{2} \alpha^2$$

With this in mind, starting at the point $t$ of the last cut, 
until the next cut is made,
put $\gamma = \frac{10 \epsilon - 4y_t}{20 \epsilon + 4(x_t-y_t)}$, and 
define the measures $\mu_1, \mu_2$ by $\mu_1([t, x]) = 4 \gamma (x-t),
\mu_2([t, x]) = 4 (1-\gamma) (x-t)$ for every $x \ge t$ on the
feasible
prefix. If the next cut made by the Balancer is made at $t' > t$
(in the feasible prefix), for any allocation of the interval
obtained
we get $M(v_{t'}) - M(v_t) \ge
\frac{1}{2} \alpha^2$, where $\alpha = 4 (t' - t)$. Note that
after each cut made
the adversary modifies $\mu_1, \mu_2$ in the following 
part of the interval according to the rule above.

Let $v_f$ be the vector after the last cut made in the feasible
prefix,
and denote $M_f = M(v_f)$. Note that $|x_f|, |y_f| < 2 \epsilon$,
and thus
$|y_f-x_f| < 4 \epsilon$. Hence, $M_f \le 28 \epsilon^2$. Assume that
the Balancer makes $r$ cuts in the feasible prefix 
(note that by Lemma \ref{l41} he can never allow more than $4
\epsilon$ measure of any type to arrive without a cut).
By Cauchy-Schwartz, since the total measure of the
feasible prefix is at least $1+4 \epsilon$,
we have that the total increase in the function $M$ since its
initial value $0$ is
at least $\frac{1}{2} (\frac{1}{r})^2 \cdot r = \frac{1}{2r}$.
Therefore, 
$\frac{1}{2r} \le 28 \epsilon^2$, which yields $r \ge \frac{1}{56
\epsilon^2}$, showing that $r = \Omega(\frac{1}{\epsilon^2})$, 
as desired.

\hfill $\Box$

If the number $n$ of measures is larger than $3$
it is possible to divide the interval into $\lfloor n/3 \rfloor$ 
subintervals, using $3$ of the measures as above in each of them  and
getting a lower bound of
$\Omega(\frac{n}{\epsilon^2})$ for the total number of cuts.
Recall that our online algorithm performs 
$O(\frac{n \log n}{\epsilon^2})$ cuts, matching this lower bound up
to a logarithmic factor.

\subsection{Necklace Halving}
As in subsection 5.1, we first provide a preliminary lower bound for $n =
2$ colors, using one of the colors as a punishing threat.

\subsubsection{A preliminary bound}
We provide a $\Omega(\sqrt{m})$ lower bound for the number of cuts
required in any online algorithm
when the number of colors is $n=2$ and there are $m$ beads of
each color. The argument is similar to the one for the
$\epsilon$-Consensus Halving Problem, but since each bead can have
only one of the colors it is impossible to distribute the two colors
evenly in an interval. We thus need the following simple lemma,
which is a special case of a more general elegant result of Tijdeman
\cite{Ti}. Since this special case is much simpler, we include its
proof, for completeness.
\begin{lemma}
\label{l91}
For every real $\gamma \in [0,1]$ there is an infinite 
binary sequence $a_1,a_2,a_3, \ldots $ so that in every prefix of it
$a_1, a_2, \ldots , a_j$  the number of elements $a_i$ which are
$1$ deviates from $\gamma j$ by less than $1$.
\end{lemma}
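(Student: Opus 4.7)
The plan is to give an explicit construction via floor functions and then verify both properties directly. For each $i \geq 1$ set
$$
a_i \;=\; \lfloor \gamma i \rfloor - \lfloor \gamma (i-1) \rfloor.
$$
The hypothesis $\gamma \in [0,1]$ is exactly what is needed to ensure each $a_i \in \{0,1\}$: since $\gamma i - \gamma(i-1) = \gamma \leq 1$, two consecutive floors can differ by at most $1$, and clearly $\lfloor \gamma i \rfloor$ is non-decreasing in $i$. So the sequence is a valid binary sequence.

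For the prefix condition, the sum telescopes:
$$
\sum_{i=1}^{j} a_i \;=\; \lfloor \gamma j \rfloor - \lfloor 0 \rfloor \;=\; \lfloor \gamma j \rfloor,
$$
which is exactly the number of $1$'s among $a_1,\ldots,a_j$. The required bound $|\lfloor \gamma j \rfloor - \gamma j| < 1$ is just the standard property of the floor function: $\gamma j - 1 < \lfloor \gamma j \rfloor \leq \gamma j$, so the deviation is in $[0,1)$, in particular strictly less than $1$.

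There is really no obstacle here; the proof is essentially a one-line construction. The only subtlety worth flagging is the role of the assumption $\gamma \leq 1$, which is what keeps the differences of consecutive floors binary. (If one wanted a sharper deviation bound depending on $\gamma$, one would need Tijdeman's full argument, but for the statement as given, this elementary floor construction suffices.)
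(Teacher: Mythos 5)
Your proof is correct and, if anything, cleaner than the paper's. You construct the sequence explicitly as $a_i = \lfloor \gamma i \rfloor - \lfloor \gamma(i-1)\rfloor$, verify that $\gamma\le 1$ keeps each difference in $\{0,1\}$, telescope to get $\sum_{i\le j} a_i = \lfloor\gamma j\rfloor$, and invoke $\gamma j - 1 < \lfloor\gamma j\rfloor\le\gamma j$. All of these steps are sound (including the boundary cases $\gamma=0,1$ and the case $\gamma j\in\mathbb{Z}$, where the deviation is $0$). The paper takes a genuinely different route: it first reduces to finite prefixes by compactness, then sets up the linear system $0\le x_i\le 1$, $\lfloor\gamma j\rfloor \le \sum_{i\le j} x_i \le \lceil\gamma j\rceil$, observes that $x_i\equiv\gamma$ is a feasible real point, and appeals to total unimodularity of the constraint matrix to extract a $\{0,1\}$-valued solution. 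The paper's argument is more machinery-heavy for this particular lemma, but it is the template that scales to the kind of multi-constraint balancing statements Tijdeman proves in general; your floor-telescope construction is the standard one-line proof for the one-dimensional version and is entirely adequate for the statement as given. Either proof would serve the paper's purposes.
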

\begin{proof}
By compactness it suffices to prove the existence of such a
sequence of any finite length $r$. Consider the following system of
linear inequalities in the variables $x_1, x_2, \ldots ,x_r$:
$0  \leq x_i \leq 1$ for all $1 \leq i \leq r$, and for every $j
\leq r$,
$\lfloor \gamma j \rfloor \leq \gamma j \leq \lceil \gamma j
\rceil$. This system has a real solution $x_i=\gamma$ for every $i$
and the matrix of coefficients of the constraints is totally
unimodular. Hence there is an integral solution $x_i=a_i \in
\{0,1\}$ providing the required sequence.
\end{proof}

We use the following notation.
During the algorithm let
$t$ denote the number of beads revealed so far. If a cut is made
at this point, let
$x_t$ be the difference between the number of beads of
color 1 allocated  to agent 1 and the number of
beads of color 1 allocated to agent 2. Define $y_t$
similarly for beads of color 2. Let $\alpha_t, \beta_t$ denote
the number of remaining  beads of colors $1$ and $2$,
respectively.

\begin{lemma}
\label{l42}

Let $\Delta$ be a positive integer.  Suppose that a cut is made at
point $t$ and 
$|x_t| = \Delta$ and assume that no bead of color $2$ appeared so
far. Then there exists an adversarial
input that forces the Balancer to make at least 
$\Delta / 4 = \Omega(\Delta)$ cuts.

\end{lemma}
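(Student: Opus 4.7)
The plan is to adapt the punishing argument of Lemma \ref{l41} to the discrete setting, using the entire remaining pool of color $2$ as the threat. Assume without loss of generality $x_t = \Delta > 0$; if $\alpha_t = 0$ then $x_{\mathrm{end}} = \Delta$ is unavoidable and the lemma holds vacuously (no proper solution exists for $\Delta \ge 2$), so I may assume $\alpha_t \ge 1$. Set $\gamma = \alpha_t / (\alpha_t + m)$, and observe that $\gamma \le 1/2$ since $\alpha_t \le m$.

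The adversary's strategy is to invoke Lemma \ref{l91}: use the binary Tijdeman sequence of length $\alpha_t + m$ with parameter $\gamma$, and reveal the remaining beads so that $1$ corresponds to color $1$ and $0$ to color $2$. The prefix-deviation bound at the endpoint forces exactly $\alpha_t$ ones (integer within an interval of length $2$ centered at $\alpha_t$), so the color totals match the true inventory. The crucial consequence, obtained by differencing two prefix deviations, is that in every contiguous block of $n$ beads the count of color-$1$ beads deviates from $\gamma n$ by less than $2$.

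Now suppose the Balancer produces $k$ further cuts, creating intervals $J_1,\dots,J_{k+1}$ with $n_i$ beads each, of which $c_i = \gamma n_i + \delta_i$ are color $1$ (with $|\delta_i| < 2$) and $d_i = (1-\gamma)n_i - \delta_i$ are color $2$, assigning sign $\sigma_i \in \{+1,-1\}$. Put $S = \sum \sigma_i n_i$ and $E = \sum \sigma_i \delta_i$. The final color-$1$ discrepancy equals $\Delta + \gamma S + E$ and the final color-$2$ discrepancy equals $(1-\gamma)S - E$ (recall $y_t = 0$). A proper solution requires both absolute values to be $\le 1$, giving
$$
\gamma S + E = -\Delta + r_1,\qquad (1-\gamma)S - E = r_2,\qquad |r_1|,|r_2|\le 1.
$$
Adding the two relations gives $S = -\Delta + r_1 + r_2$, and substituting back yields $E = -(1-\gamma)\Delta + O(1)$. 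Since $|E| \le 2(k+1)$ and $|E| \ge (1-\gamma)\Delta - O(1) \ge \Delta/2 - O(1)$, we conclude $k = \Omega(\Delta)$, giving the advertised $\Delta/4$ bound after handling the additive constants.

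The main obstacle is choosing the adversarial interleaving correctly: a naive alternating pattern would let the Balancer snip off matched pairs cheaply. Lemma \ref{l91} is what simultaneously (i) respects the exact remaining budgets of both colors and (ii) couples the two color discrepancies tightly on \emph{every} subinterval, so that the two constraints above are essentially scalar multiples of each other up to $O(1)$. Once this coupling is in place, the linear-algebraic juggling of the two constraints to extract a lower bound on $|E|$ (and hence on $k$) is routine, with the factor $(1-\gamma)\ge 1/2$ furnished by $\alpha_t \le m$ providing the clean leading constant.
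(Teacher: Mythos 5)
Your proposal is correct and follows essentially the same approach as the paper: use Tijdeman's theorem (Lemma 4.3) to interleave the remaining color-$1$ and color-$2$ beads so that in every block the color counts are within $O(1)$ of their expected proportions, then combine the total bead-count constraint (your $S = -\Delta + O(1)$, the paper's $\sum \ell(T_i) - \sum \ell(R_j) = \Delta$) with the color-balance constraint to show the per-cut deviation of at most $2$ must accumulate to roughly $\Delta/2$, forcing $\Omega(\Delta)$ cuts. The only differences are cosmetic: you define $\gamma$ as the color-$1$ proportion while the paper uses the color-$2$ proportion, and you parametrize via per-interval error terms $\delta_i$ and algebraically eliminate $S$, whereas the paper tracks the two agents' color-$2$ counts directly.
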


\noindent
{\bf Proof of Lemma \ref{l42}:}

Without loss of generality assume that
$x_t = \Delta >0$. Note that  by assumption $\beta_t=m$ and
$\alpha_t <m$. Put
$\gamma =\frac{m}{\alpha_t +m}$ and note that $\gamma>1/2$. 
By Lemma \ref{l91} it is possible to choose an ordering of the
remaining $\alpha_t+m$ beads of the necklace  so that in every 
prefix of it of any length $j$, the number of beads of color
$2$ deviates from $\gamma j$ by less than  $1$. Since our online
model allows the Balancer to see the next bead before the decision
to make a cut preceding it  we may have to change the first bead 
in this ordering, this still ensures that in any interval of
length $\ell$ in the remainder of the necklace, the number of 
beads of color $2$ deviates from $\gamma \ell$ by at most $2$. 

Suppose the Balancer cuts the remainder of the necklace
and allocates the resulting intervals $R_1, ..., R_u$ to
agent 1 and $T_1, ..., T_v$ to agent 2 to obtain a balanced
allocation. For each one of these intervals $I$ let $\ell(I)$
denote its
length. By assumption
at time $t$ agent 1 has exactly $\Delta$ more beads than agent 2.
Since at the end each agent has half of the beads (for simplicity
we assume that $m$ is even),
$\sum_{i = 1}^v \ell(T_i) -
\sum_{j = 1}^u \ell(R_j) = \Delta$. 

By construction, the total number of beads of color $2$ in all
intervals $T_i$ deviates from $\gamma \sum_{i = 1}^v \ell(T_i)$
by at most $2 v$. Similarly, the total number of beads of color $2$
is all intervals $R_j$ deviates from $\gamma \sum_{j = 1}^u
\ell(R_j)$ by at most $2u$. As these two numbers should be equal
it follows that 
$$
\gamma \Delta =\gamma (\sum_{i = 1}^v \ell(T_i) -
\sum_{j = 1}^u \ell(R_j)) \leq 2u+2v
$$
This implies that $2(u+v) \geq \gamma \Delta > \Delta/2$ 
and as the number of cuts is
at least $u+v$ the desired result follows.

\hfill $\Box$

The last lemma easily implies the following.

\begin{theo}
\label{t43} 
There exists an adversarial input that forces any deterministic algorithm
for Online Necklace Halving with $n = 2$ colors to make $\Omega(\sqrt{m})$
cuts in order to obtain a proper solution.
\end{theo}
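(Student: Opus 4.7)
\noindent
\textbf{Proof plan for Theorem \ref{t43}.}
The plan is to extend the ``punishment'' scheme of Theorem \ref{t41}, exploiting the fact that once all color-$1$ beads have been exhausted the Balancer can no longer adjust the color-$1$ discrepancy at all. The adversary reveals all $m$ color-$1$ beads first (Phase 1) and holds the $m$ color-$2$ beads in reserve (Phase 2). Fix a threshold $\Delta = \lfloor \sqrt{m}/4 \rfloor$. The analysis splits into two cases according to whether the Balancer ever allows $|x_t|$ to reach $\Delta$ immediately after a cut made during Phase 1.

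In the first case, some Phase 1 cut leaves $|x_t| \ge \Delta$. Because no color-$2$ bead has been revealed at that point, Lemma \ref{l42} applies directly and produces an adversarial continuation forcing $\Omega(\Delta) = \Omega(\sqrt{m})$ additional cuts. In the second case, $|x_t| < \Delta$ after every Phase 1 cut. Each Phase 1 interval consists only of color-$1$ beads, so its allocation shifts $x_t$ by $\pm \ell$, where $\ell$ is its length in beads. The bound $|x_t| < \Delta$ both before and after each such cut forces $\ell < 2\Delta$ for every Phase 1 interval, so the total number $L$ of color-$1$ beads allocated during Phase 1 satisfies $L < 2 c_1 \Delta$, where $c_1$ is the number of Phase 1 cuts.

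The key observation that completes the second case is that the remaining $m - L$ color-$1$ beads sit inside a single open interval that is eventually closed either by the first Phase 2 cut or by the end of the necklace. That allocation shifts the color-$1$ discrepancy by $\pm (m - L)$, and every interval after it is purely color-$2$ and therefore cannot change $x_t$ any further. A proper solution thus requires $|D_1 \pm (m - L)| \le 1$, where $D_1$ is the color-$1$ discrepancy accumulated in Phase 1 and satisfies $|D_1| < \Delta$; this forces $m - L < \Delta + 1$. Combined with $L < 2 c_1 \Delta$ we obtain $c_1 > (m - \Delta - 1)/(2 \Delta) = \Omega(\sqrt{m})$, yielding the theorem in both cases.

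The main potential obstacle is that the Balancer could try to defer all Phase 1 cuts and then rebalance in Phase 2, but the argument rules this out uniformly: a deferred commitment still pushes $\pm (m - L)$ into $x_t$ in a single irrevocable step, and the absence of color-$1$ beads after Phase 1 leaves no room for recovery. The only genuine bookkeeping task is to choose $\Delta$ of order $\sqrt{m}$ so that the Lemma \ref{l42} bound and the bead-count bound both certify $\Omega(\sqrt{m})$ cuts with matching constants, which the choice $\Delta = \lfloor \sqrt{m}/4 \rfloor$ achieves.
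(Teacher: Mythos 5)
Your proposal is correct and takes essentially the same approach as the paper: reveal all $m$ color-$1$ beads first, invoke Lemma \ref{l42} if the color-$1$ discrepancy ever exceeds $\Theta(\sqrt{m})$ after a cut, and otherwise count cuts by bounding interval lengths. Your write-up is somewhat more detailed than the paper's very terse version --- in particular you explicitly handle the leftover open block of $m - L$ color-$1$ beads that is closed only in Phase 2 (or by the end of the necklace), a case the paper folds into ``it is clear'' --- but the underlying idea and the resulting bound are the same.
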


\noindent
{\bf Proof of Theorem \ref{t43}:}

Put $\Delta = \sqrt{m}$ and proceed by revealing only beads of color $1$. 
By Lemma \ref{l42}, if after  
a cut  at some $t$, $|x_t| > \sqrt{m}$ the desired result follows.
Otherwise it is clear 
the number of beads between any two consecutive
cuts 
is less than $2\sqrt{m}$, implying that the
total number
of cuts made by the Balancer is $\Omega(\sqrt{m})$.
\hfill $\Box$

\subsubsection{A nearly tight bound}
\begin{theo}
\label{t44} 
An adversary can force any deterministic algorithm
for Online Necklace Halving with $n = 3$ colors and $m$ beads of
each color to make $\Omega(m^{2/3})$ cuts.
\end{theo}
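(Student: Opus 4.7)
\textbf{Proof proposal for Theorem \ref{t44}.}

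The plan is to adapt the potential-function argument from the proof of Theorem \ref{t42} to the discrete setting, using color $3$ as a punishing threat in the spirit of Lemma \ref{l42} and Tijdeman's Lemma \ref{l91} to realize adversarially-chosen proportions of colors $1$ and $2$ bead-by-bead. Reserve color $3$: throughout the adversary's main strategy no color-$3$ bead is revealed. Let $x_t, y_t$ denote the signed discrepancies on colors $1$ and $2$ after the $t$-th cut, and write $v_t = (x_t, y_t)$. A three-color analog of Lemma \ref{l42}, obtained by invoking its proof with color $3$ in the role of the untouched color and deferring any remaining color-$2$ beads to the tail of the necklace, shows that if $|x_t| \geq T$ or $|y_t| \geq T$ at some moment, then $\Omega(T)$ further cuts are forced. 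Setting $T = c\, m^{2/3}$ for a small constant $c > 0$ already yields the claimed bound in this case, so from now on assume $|x_t|, |y_t| < T$ throughout the main phase.

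Set $C = 4T$ and define the potential
\[
M(x,y) \;=\; x^2 + y^2 + C\,x - C\,y.
\]
After the $t$-th cut, the adversary picks
$\gamma_t = (C - 2y_t)/(2C + 2(x_t - y_t)) \in (0,1)$
and reveals the subsequent colors-$1$-and-$2$ beads in the proportion $\gamma_t : (1-\gamma_t)$ via Lemma \ref{l91}, so that any prefix of the revealed suffix has color-$1$ count deviating from $\gamma_t$ times the prefix length by less than $1$. If the Balancer's next cut yields $p_1$ color-$1$ and $p_2$ color-$2$ beads with $\alpha = p_1 + p_2$, write $p = \alpha(\gamma_t, 1-\gamma_t) + \delta$, so $\|\delta\|_\infty < 1$. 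The choice of $\gamma_t$ makes $\alpha(\gamma_t, 1-\gamma_t) \cdot (2x_t+C, 2y_t-C) = 0$, mirroring Theorem \ref{t42}, so that for either allocation of the new interval,
\[
M(v_t \pm p) - M(v_t) \;=\; \|p\|^2 \pm \delta \cdot (2x_t + C,\, 2y_t - C) \;\geq\; \tfrac{1}{2}\alpha^2 - O(C),
\]
since $|x_t|, |y_t|, C$ are all $O(C)$. In particular $M$ is monotonically non-decreasing regardless of the Balancer's assignments.

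Since the subsequent color-$3$ reveal cannot alter $x, y$, the Balancer must end the main phase with $|x|, |y| \leq 1$, so the final potential is $O(C)$. Let $r$ be the number of main-phase cuts, producing $r+1$ intervals of sizes $\alpha_1, \ldots, \alpha_{r+1}$ with $\sum_i \alpha_i = 2m$. Summing the per-cut bounds yields $\tfrac{1}{2}\sum_i \alpha_i^2 - O((r+1)C) \leq O(C)$, hence $\sum_i \alpha_i^2 = O((r+1)C)$. By Cauchy--Schwarz, $\sum_i \alpha_i^2 \geq (2m)^2/(r+1)$, so $(r+1)^2 \geq \Omega(m^2/C) = \Omega(m^{4/3})$, giving $r = \Omega(m^{2/3})$ as desired.

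The main obstacle is verifying the three-color analog of Lemma \ref{l42} cleanly: color $3$ must play the role of the untouched color even though beads of color $2$ have been revealed meanwhile. This is handled by applying the Tijdeman-interleaving construction of that lemma to the remaining color-$i$ beads together with all $m$ color-$3$ beads, and deferring the remaining color-$2$ beads to the tail, which can only add cuts. The smaller subtleties---the Tijdeman error bound when the Balancer sees the next bead before cutting, keeping $\gamma_t \in (0,1)$ throughout the main phase, and the case when color $1$ or $2$ is exhausted before $2m$ beads are revealed---are routine variants of arguments already in the proofs of Lemma \ref{l42} and Theorem \ref{t42}.
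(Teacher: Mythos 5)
Your proposal follows essentially the same approach as the paper's proof: color $3$ reserved as a punishment threat via (a three-color analog of) Lemma~\ref{l42}, the quadratic potential $M(x,y)=x^2+y^2+Cx-Cy$ with $C=\Theta(m^{2/3})$, adversarial proportions $\gamma_t$ chosen to annihilate the linear term $p\cdot\nabla M(v_t)$ so that each cut increases $M$ by at least $\tfrac12\alpha^2 - O(m^{2/3})$, realization of these proportions bead-by-bead through Tijdeman's Lemma~\ref{l91}, and finally Cauchy--Schwarz to conclude $r=\Omega(m^{2/3})$. The only real difference is bookkeeping: you run the potential argument over all $2m$ colored beads and bound the terminal potential by $O(C)$ using the final discrepancy constraint $|x|,|y|\le 1$, whereas the paper restricts attention to the first $m+4m^{2/3}$ beads and bounds the potential at the last cut in that window by $O(m^{4/3})$ via $|x_t|,|y_t|\le m^{2/3}$; both give the same asymptotic bound, though the paper's truncation sidesteps more cleanly the worry about one color being exhausted mid-phase, which you flag only as a ``routine variant.''
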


\noindent
{\bf Proof of Theorem \ref{t44}:}
As in the previous subsection, let $x_t$ denote the
discrepancy between the number of beads of color $1$  
allocated to agent $1$
and that allocated to agent $2$ after cut $t$, and let $y_t$ denote
the corresponding discrepancy for color $2$, where color $3$ will
be kept as a  punishment threat. We proceed by revealing only beads
of the first two colors.
By Lemma \ref{l42} with $\Delta = m^{2/3}$ the
Balancer needs to maintain $|x_t|, |y_t| \le m^{2/3}$, since otherwise
the adversary can force $\Omega(m^{2/3})$ cuts, using beads of the
third color. Hence we assume that
during the process of revealing the initial $m+4m^{2/3}$ beads of the
necklace $x_t,y_t$ stay in the above range after each cut.

Define a potential function 
$$M(x, y) = x^2 + y^2 + 5 m^{2/3} (x - y).$$ 

After a cut with $v_t=(x_t,y_t)=(x,y)$ define
$\gamma = \frac{10 m^{2/3} - 4y}{20 m^{2/3} + 4 (x - y)}$. Note
that $0 < \gamma < 1$, as $|x|, |y| \le m^{2/3}$. By Lemma 
\ref{l91}  it is possible to order the remaining part of the first 
$m+4m^{2/3}$  beads of the necklace  so that in each prefix of any length
$j$ of this remaining
part the number of beads of color $1$ deviates from $\gamma j$ by
less than $1$ and the number of beads of color $2$ deviates by less
than $1$ from $(1-\gamma)j$. As the first bead of this remaining
part has been observed already by the Balancer we may need to
change one bead in this ordering, getting a deviation of less than 
$2$ in each prefix. This means that if the next cut will be made
after some $j$ additional beads, the vector $p=(p_1,p_2)$ of
additional beads of colors $1$ and $2$, respectively, can be
written as a sum of the vector $p'=(\gamma j, (1-\gamma)j)$ and an
error vector $\delta=(\delta_1,\delta_2)$ 
of $\ell_{\infty}$-norm smaller
than $2$. By a simple computation analogous to the one done
in the proof of the lower bound for the $\epsilon$-Consensus Halving
problem it follows that
$M(v_t+p')-M(p') \geq \frac{j^2}{2}$
and 
$M(v_t-p')-M(p') \geq \frac{j^2}{2}$.
A simple computation using the fact that $|x|,|y| \leq m^{2/3}$
and that a similar bound holds after adding or subtracting the vector 
$p'$ shows that adding or subtracting the vector
$\delta$ can decrease the value of $M$ by less than
$15 m^{2/3}$. Therefore the value of $M$ increases by at least
$j^2/2-15 m^{2/3}$ with a cut of $j$ beads. 

Suppose that we have $r$ cuts among the first $m+4m^{2/3}$ beads
of the necklace, and the lengths of the resulting intervals are
$j_1, j_2, \ldots ,j_r$. Since throughout the process 
$|x_t|, |y_t| \leq m^{2/3}$, it follows that
$M(x_t,y_t) \leq 12m^{2/3}$. On the other hand  by the above
discussion the value of $M$ at the end is at least
$\sum_{i=1}^r \frac{j_i^2}{2}-15 m^{2/3} r$.
Since $\sum_{i=1}^r j_i \geq m$ (as we cannot have $4m^{2/3}$
consecutive beads with no cut among them), it follows, by
Cauchy-Schwartz, that $\sum j_i^2 \geq \frac{m^2}{r}
$. This implies that
$$
\frac{1}{2}\frac{m^2}{r} 
- 15 r m^{2/3} \leq 12 m^{4/3}
$$
showing that $r = \Omega(m^{2/3})$, as needed. 

\hfill $\Box$

\noindent
{\bf Remark:}\, 
For $n>3$ colors with $m$ beads of each color
one can consider a necklace consisting of 
$\lfloor n/3 \rfloor$ segments with at least $3$ colors in each of
them. The above argument shows that it is possible to force
$\Omega(m^{2/3})$ cuts in each segment, implying
an $\Omega(n m^{2/3})$ lower bound. Thus, for $n$ colors,
the gap between our lower and upper bounds  for the number of cuts
required is only a factor of $\Theta((\log n)^{1/3})$.

\section{Extensions and open problems}
We conclude with some generalizations of the algorithms presented and
the lower bounds obtained, and with comments on the questions that remain
open. For the generalizations, we include the statements and a brief
overview of each of the proofs.

\subsection{Generalizations}
Our online algorithm for $\epsilon$-Consensus Halving can be adapted for
the general case of $k$ agents for $\epsilon$-Consensus Splitting so as to
provide a proper solution making $O(\frac{k n \log (nk)}{\epsilon^2})$
cuts. To obtain a proper solution for the general case of $k$ agents,
it suffices to make the absolute discrepancy at most $\epsilon/k$.
To do so
we use the idea of defining a potential function $\phi$ and
a function $\psi$ that is an upper bound for $\phi$ and is computable
efficiently. Instead of having one pair of functions $\phi_i,
\psi_i$ for each measure $i$, we now have $\binom{k}{2}$ such functions,
one for each pair of agents. For each measure $i$ and agents $a \ne b$,
define $\phi_{i}^{a, b} = \E{\frac{e^{\lambda X_{a, b, i}} + e^{-
\lambda X_{a, b, i}}}{2}}$, where $X_{a, b, i}$ is the random variable
of the difference between the share of agent $a$ on measure $i$ and
that
of agent $b$ on this measure. The relevant random
distribution here assigns 
every newly created interval to one of the $k$
agents with equal probability which is $1/k$.  The quantity
$g=g(n,k,\epsilon)$ 
is defined here as $g=\frac{\epsilon^2}{100 k \log (nk)}$ and
the function  $\psi_i^{a, b}$ is defined by
$$
\psi_i^{a, b}(t) = \frac{e^{\lambda
x_{t, i}^{a, b}} + e^{- \lambda x_{t, i}^{a, b}}}{2} \cdot e^{2
\lambda^2 g (1 - s_t)/k},
$$ 
where $s_t$ is, as before,
the amount of measure $i$ allocated already, and $x_{t, i}^{a, b}$ is
the discrepancy between $a$ and $b$ on measure $i$ after cut $t$. 

The main difference required here is the replacement of the 
inequality $\cosh(\lambda a) \leq e^{\lambda^2 a^2/2}$ by the 
following inequality which holds whenever, say,
$\lambda a \leq 1$:
$$
\frac{k-2}{k} e^{\lambda \cdot 0} + \frac{1}{k} e^{\lambda a} 
+ \frac{1}{k} e^{-\lambda a} 
=1+\frac{2}{k} (\cosh(\lambda a)  -1)
$$
$$
\leq 1 +\frac{2}{k} (e ^{\lambda^2 a^2/2}-1) 
\leq 1+\frac{2}{k} \frac{2 \lambda^2 a^2}{2} =
1+\frac{2 \lambda^2 a^2}{k} \leq e^{2\lambda^2 a^2/k}.
$$

Each $\phi_i^{a, b}$ is bounded using
the fact that each of the intervals created is of $i$-measure
at most $g $ for every $i$. By the inequality applied with
$a \leq g$ and $\lambda=\frac{\epsilon}{4g}$ (ensuring that indeed
$\lambda a \leq \frac{\epsilon}{4} <1/2$), it follows that if every
interval generated is allocated to an agent in order to minimize 
$\psi = \sum_{a, b \in [k], \enspace a \ne b, \enspace
i \in [n]} \psi_{i}^{a, b}$, then the function $\psi$ never increases
during the algorithm. As
$$
\psi(0)<nk^2 e^{2\lambda^2 g/k} =nk^2 e^{\epsilon^2/8gk} < 
\frac{e^{\lambda \epsilon/k}}{2}
$$
the computation shows that at the end
the absolute discrepancy is $\le \epsilon / k$. We omit the
detailed computation.

The offline algorithm is also adaptable for the general case
of $k$ agents for $\epsilon$-Consensus Splitting. We obtain an algorithm
that makes at most $n (k-1) \lceil 2 + \log_2 \frac{3k}{\epsilon} \rceil$
cuts and provides a proper solution. To this end, we use recursion and
apply a modified version of the algorithm from Theorem \ref{t33} at each
recursion step. Define $\epsilon' = \epsilon / 3k$, and divide the $k$
players into two disjoint groups $A, B$, with $\lfloor k / 2 \rfloor$
agents and $\lceil k / 2 \rceil$ agents respectively. Think of
$A, B$ as two agents and split $[0, 1]$ among them. By following the
algorithm in the proof of Theorem \ref{t33}, one can make 
$\le n (2 + \lceil \log_2
\frac{1}{\epsilon'} \rceil)$ cuts and split the interval so that $A$
gets $\frac{\lfloor k / 2 \rfloor}{k} \pm \epsilon' / 2$ of each measure
$i$. We can do so by starting with all floating coefficients equal to
$\frac{\lfloor k / 2 \rfloor}{k}$ instead of $\frac{1}{2}$ and by following
the proof of Theorem \ref{t33}. Repeat the same procedure for the
groups $A$ and $B$ recursively,  splitting the share of
$A$ 
among its $|A|$ members and doing the same for $B$. It is not
difficult to 
bound the error at the end, checking that it indeed provides  a
proper solution.

Regarding Necklace Splitting, we can adapt the algorithm
for Online $\epsilon$-Consensus Splitting to provide a proper online
solution with $\tilde{O}(n k^{1/3} \cdot m^{2/3} )$ cuts. Note
that this is trivial for $k>m$. For $k \leq m$, define  $\epsilon=
(k/m)^{1/3}$ and proceed as in the proof of Theorem \ref{t34}
defining a color to be critical when the number of remaining beads
of this color is at most $10k^{1/3}m^{2/3}$. As here 
$\epsilon  m_i/k \leq m^{2/3}/k^{2/3}$, when a color becomes
critical the discrepancy between any two agents in it is at most
$ 2 m^{2/3}/k^{2/3}$ and there are enough remaining beads of this
color to enable the algorithn to produce a balanced partition
between all $k$ agents. For the offline
model, we obtain a solution using $O(nk \log m)$ cuts.

Finally we mention that if the number of measures is $n=2$ then for
every $k$ there is an efficient offline algorithm finding a proper solution
with an optimal number of $2k-2$ cuts. This holds for Necklace
Splitting as well as for $\epsilon$-Consensus Splitting. Here
is a sketch for the case of necklaces when the number of beads of
each color is divisible by $k$. Given a necklace with
$m_1$ beads of color $1$ and $m_2$ beads of color $2$ consider it
as a circular necklace. By the discrete intermediate value theorem
there is a circular arc of $(m_1+m_2)/k$ beads containing exactly
$m_1/k$ beads of color 1 (and hence also exactly $m_2/k$ beads of
color $2$.) Cut in the ends of this circular arc, assign it to 
the first agent, and continue inductively.

\subsection{Open questions}
As  stated in an earlier remark, the algorithm in the proof of
Theorem
\ref{t21} gives each agent $\frac{1}{nk}$ of each measure. 
As $n$ tends to infinity, 
this quantity tends to $0$. It will be interesting to decide if the 
$\frac{1}{nk}$ bound can be improved, and in particular, if for
every $k$ there is
a constant $c=c(k) > 0$
so that there exists an efficient algorithm that makes $\le n (k - 1)$
cuts on $[0, 1]$ and gives each agent at least a fraction
$c$ of each measure.\ 

Another open question arises in the context of the Online
$\epsilon$-Consensus Halving problem for $n = 2$ measures, where
the lower bound for the number of cuts is only
$\Omega(\frac{1}{\epsilon})$, whereas the upper bound for the
number of cuts produced by our algorithm  is
$O(\frac{1}{\epsilon^2})$.
The analogous question is open for Online Necklace Halving
with $n = 2$ measures, where the bounds we know for the optimal
number of cuts are
$\Omega(\sqrt{m})$ and $O(m^{2/3})$. \

Lastly, for the general case of $n$ measures for the online version
of $\epsilon$-Consensus Halving there is a logarithmic gap between the
lower bound and the algorithm we provided. For Online Necklace Halving,
the gap is $\Theta((\log n)^{1/3})$. It will be interesting to
close these gaps.

\end{document}